\documentclass[%
 reprint,
superscriptaddress,
 amsmath,amssymb,
 aps,
]{revtex4-2}

\usepackage{graphicx}
\usepackage{dcolumn}
\usepackage{bm}
\usepackage{color}
\usepackage{amsfonts}
\usepackage{dsfont}
\usepackage{amsmath}
\usepackage{amssymb}
\usepackage{changes}
\usepackage{comment}
\usepackage{wrapfig}
\usepackage{hyperref}
\usepackage{physics}
\usepackage{times}
\usepackage{newtxmath}
\usepackage{mathtools}
\usepackage{tcolorbox}

\newtheorem{theorem}{Theorem}
\newtheorem{lemma}{Lemma}
\newtheorem{proposition}{Proposition}

\newtheorem{definition}{Definition}

\newtheorem{corollary}{Corollary}

\newenvironment{proof}[1][Proof]{\noindent\textbf{#1.} }{\ \rule{0.5em}{0.5em}}

 \newcommand{\proj}[1]{\ket{#1}\bra{#1}}

\newcommand{\loc}{{\rm L}}

\definecolor{nred}{rgb}{0.9,0.1,0.1}
\definecolor{pur}{rgb}{0.75,0,0.75}

\newcommand{\paolo}[1]{{\color{orange} #1}}
\newcommand{\shh}[1]{{\color{violet} #1}}

\begin{document}


\title{
Quantum steering in networks: \\Measurement-device-independent detection, continuous variables, and practical Gaussian schemes
%
}

\author{Shao-Hua Hu\hyperlink{equal}{\textsuperscript{$\star$}}}\email{shhphy@gmail.com}
\affiliation{Department of Physics, National Tsing Hua University, Hsinchu 30013, Taiwan}

\author{Chung-Yun Hsieh}
\affiliation{H. H. Wills Physics Laboratory, University of Bristol, Tyndall Avenue, Bristol, BS8 1TL, United Kingdom}

\author{Huan-Yu Ku}
\affiliation{Department of Physics, National Taiwan Normal University, Taipei 11677, Taiwan}

\author{Ray-Kuang Lee}\email{rklee@ee.nthu.edu.tw}
\affiliation{Department of Physics, National Tsing Hua University, Hsinchu 30013, Taiwan}
\affiliation{Institute of Photonics Technologies, National Tsing Hua University, Hsinchu 30013, Taiwan}
\affiliation{Department of Electrical Engineering, National Tsing Hua University, Hsinchu 30013, Taiwan}

\author{Paolo Abiuso\hyperlink{equal}{\textsuperscript{$\star$}}}\email{paolo.abiuso@oeaw.ac.at}
\affiliation{Institute for Quantum Optics and Quantum Information - IQOQI Vienna,
Austrian Academy of Sciences, Boltzmanngasse 3, A-1090 Vienna, Austria}


\begin{abstract}
%
We consider quantum steering certification in multipartite networks, with a focus on \emph{minimal trust scenarios}: all-except-one parties are untrusted and treated device-independently.
We show that it is always possible to lift steering certification to the measurement-device-independent regime, in which even the (last) trusted party can treat their local hardware as a black-box, except for a set of fiduciary quantum states used as the inputs to the experiment. 
This holds both for finite-dimensional systems as well as for bosonic continuous-variable systems, for which we provide a full characterization in the bipartite case.
Additionally, we introduce measurement-device-independent network steering protocols based entirely on Gaussian operations -- which cannot be used for fully device-independent protocols, and thus become instead a viable option for minimal trust certification as soon as a single trusted input is inserted in the network.
Our results present a basis for steering-based applications (such as randomness generation) with minimal trust beyond full nonlocality and with feasible experimental requirements.

\end{abstract}

\maketitle


\newpage

\section{Introduction}

Large-scale quantum networks
require nonlocal resources as a central ingredient for achieving quantum advantage in  communication tasks~\cite{GisinRMP2002,Chiribella2009,Wehner2018QuantumInternet,Azuma2023QuantumRepeaters,Main2025Distributed}. 
To this end, many protocols are proposed to distribute nonlocal resources in a network, e.g., by generating a cluster state~\cite{Raussendorf2001,Yang2022,Roh2025,GashuFeyisa2025QST}, 
distributing entanglement
~\cite{Neumann2022,Jiang2026,Zhang2026PRL}, and 
routing quantum nodes~\cite{ChudzickiPRL2010,Hahn2019,Kristjnsson2024npjQI}. While several nonlocal quantum communication protocols, such as teleportation and key distribution, have been successfully implemented in networks~\cite{RossetPRL2016,Juan2017,Liao2017,MiguelPRL2020,VillegasAguilar2024}, the detection and characterization of the underlying nonlocal correlations remain rather
limited. 
For instance, the cluster state in a network can only be witnessed through trusted stabilizers on the local nodes~\cite{TothPRL2005,Zhou2019,Seong_2026}. 
On the other hand, extending this characterization to a device-independent (DI) framework such as in Bell nonlocality~\cite{BrunnerRMP2014,Tavakoli_2022} is notoriously hard in its experimental requirements, making such protocols usually quite experimentally demanding and even practically challenging.


These challenges grow in complexity when certifying entanglement and nonlocality detection in networks. 
This structural complexity stands in contrast to the simplest bipartite networks, where the problem reduces to standard entanglement or Bell nonlocality detection, with the latter underpinning fully DI protocols that certify quantum properties solely from the statistics of a causally constrained experiment. 
In the bipartite setting, various applications have already been extensively investigated~\cite{Cirac1997QSTandEntanglementPRL,AcinPRL2007,GUHNE20091,BrunnerRMP2014,VaziraniPRL2014,Chen2021robustselftestingof,Lucas2025PRR}. 
Beyond this, quantum networks also exhibit a much richer structure when considering intermediate schemes in which some nodes are trusted while others are not, i.e., \emph{semi-device-independent} protocols. In this configuration, detection of nonlocal correlations can be formulated as network quantum steering~\cite{Armstrong2015,Cavalcanti2015,JonesPRL2021,Zhang2023,Porto2026arXiv,Sarkar2026PRA,Sarkar2024Quantum}. Analogous to the bipartite setting, this framework describes the ability of an uncharacterized party to remotely change, or ``steer," the quantum state of a trusted node through local measurements. 
While being a notion of nonclassicality of historical importance in the early development of quantum foundations~\cite{WisemanPRL2007,CavalcantiSteeringRev2017,UolaRMP2020,XiangPRXQ2022}, the operational features of quantum steering now
lie at the core of one-sided device-independent (1S-DI) quantum information processing, such as in randomness certification~\cite{Passaro2015optimal,joch2022certified,li2024randomness,li2025necessary}.
Steering is subject to intense study in the field of quantum information, for its links to measurement incompatibility~\cite{quintino2014joint,UolaPRL2015,Ku2022NC,ku2022quantifying,Hsieh2023arXiv,Ku2024arXiv}, instrument incompatibility~\cite{Ji2024PRXQ,Hsieh2024PRL}, and thermodynamic work extraction~\cite{Hsieh2024PRL,Hsieh2025arXiv,Beyer2019PRL,Ji2022PRL,Biswas2025PRL}.

In this context, measurement-device-independent (MDI) resource detection relaxes the stringent requirement of full device independence by focusing on scenarios where only (some of) the state-preparation devices are locally trusted. The MDI framework has been successfully applied to detect various bipartite resources, either by constructing semi-quantum games or by reformulating resource witnesses, including entanglement~\cite{buscemi2012all,branciard2013measurement,XuPRL2014,VerbanisPRL2016,LiPRL2020}, teleportation, and their dynamical variants~\cite{CavalcantiPRL2017,rosset2018resource,IvanPRA2019,PatrykPRR2020,PatrykPRXQ2021}. Consequently, those resources are detectable via Bell-like inequalities with quantum inputs, bypassing the need for full state/device characterization at the trusted side. In Refs.~\cite{abiuso2021measurement,abiuso2023verification,Larsen2026continuous}, the MDI witnessing of different resources was promoted to the regime of continuous variables (CV), which is crucial to applications in quantum technologies~\cite{braunstein2005quantum,weedbrook2012gaussian}.
In the case of standard bipartite quantum steering, its MDI certification has been put forward in a series of works~\cite{cavalcanti2013entanglement,zhao2020experimental,guo2019measurement,ZhaoOptica23} for discrete-variable systems only.
Moreover, the MDI extension to network scenarios remains significantly limited.

In this work, we overcome previous restrictions and extend the MDI detection of steering to the network scenario, and including the case of CV systems. Specifically, we investigate the certification of quantum steering in multipartite networks under minimal-trust scenarios, where \emph{all but one} party are treated device-independently. We show that any network steering detection in these minimal-trust scenarios can be systematically lifted to the MDI regime. Within this framework, even the trusted party can treat their measurement device as a black-box, given access to fiducial quantum input states, thereby removing the need for, e.g., measurement calibration. 
In the bosonic CV case, we establish, for the first time, a comprehensive 
framework for CV semi-quantum games in the bipartite setting. Building on this, we introduce MDI network steering protocols that rely solely on Gaussian operations. Although such operations are known to be insufficient for fully DI tasks, we demonstrate that all Gaussian steerable states can be detected in an
experimentally-friendly MDI scheme.
We conclude that MDI steering scenarios thus represent the minimal relaxation of fully DI protocols, namely with a \emph{single trusted input} in the network, which is sufficient to enable resource certification with fully Gaussian technology.

\section{Network Steering and its MDI Detection}

\subsection{{Quantum steering in networks}}
While originally observed as a feature for bipartite quantum correlations~\cite{wiseman2007steering,jones2007entanglement}, generalized notions of steering in multipartite quantum networks have been considered in recent works~\cite{jones2021network,li2025detecting,fan2025quantum}. 
In particular, in Ref.~\cite{jones2021network}, network quantum steering is introduced based on the idea of \emph{network-local-hidden-state models} (NLHS), for semi-device-independent network experiments involving some ($m$) of the ($n$) parties being trusted, while the remaining ($n-m$) are not. Intending to focus on minimal-trust scenarios, we will consider, in the following, the case $m=1$, both to keep the presentation simple as well as to keep a spirit analogous to standard bipartite steering. Notice, however, that our results can be lifted to the 
$m>1$
case.

We thus consider a network of $n$ parties, consisting of untrusted (black-box) parties  denoted by $A_i$ ($i=1,\dots,n-1$) with local input-output $x_i$ and $a_i$, and a single trusted party Bob (denoted as $B$). 
Collectively, we write $\vec{x}=(x_1,...,x_{n-1})$ and $\vec{a}=(a_1,...,a_{n-1})$.
As all-except-one parties are treated device-independently, the entire information of the experiment is contained in the reduced state assemblage in $B$, denoted by $\tilde{\sigma}_B^{\vec{a}|\vec{x}}\ge0$, 
which represents the unnormalized quantum state owned by $B$ resulting from the  $\{A_i\}_{i=1}^{n-1}$ inputs $\vec{x}$ and outputs $\vec{a}$. 
More precisely, when the inputs $\vec{x}$ are selected and outputs $\vec{a}$ are produced, Bob obtains the normalized state 
\begin{align}
\frac{\tilde{\sigma}_B^{\vec{a}|\vec{x}}}{p(\vec{a}|\vec{x})}
\quad \text{with  probability}\quad 
\label{Eq:p(a|x)}
p(\vec{a}|\vec{x})\coloneqq\Tr[\tilde{\sigma}_B^{\vec{a}|\vec{x}}].
\end{align}
Importantly, as in every steering experiment, the inputs $\vec{x}$ cannot change the quantum outcome without post-selection (i.e., the classical-to-quantum no-signalling).
This is mathematically expressed as
\begin{align}
\sum_{\vec{a}}\tilde{\sigma}_B^{\vec{a}|\vec{x}}=\sigma_B \quad\forall\vec{x},
\end{align}
where $\sigma_B$ is a normalized state independent of $\vec{x}$.
Formally, the collection $\{\tilde{\sigma}_B^{\vec{a}|\vec{x}}\}_{\vec{a},\vec{x}}$ is called an {\em state assemblage}, or simply {\em assemblage}.


\begin{figure}
    \centering
    \includegraphics[width=\linewidth]{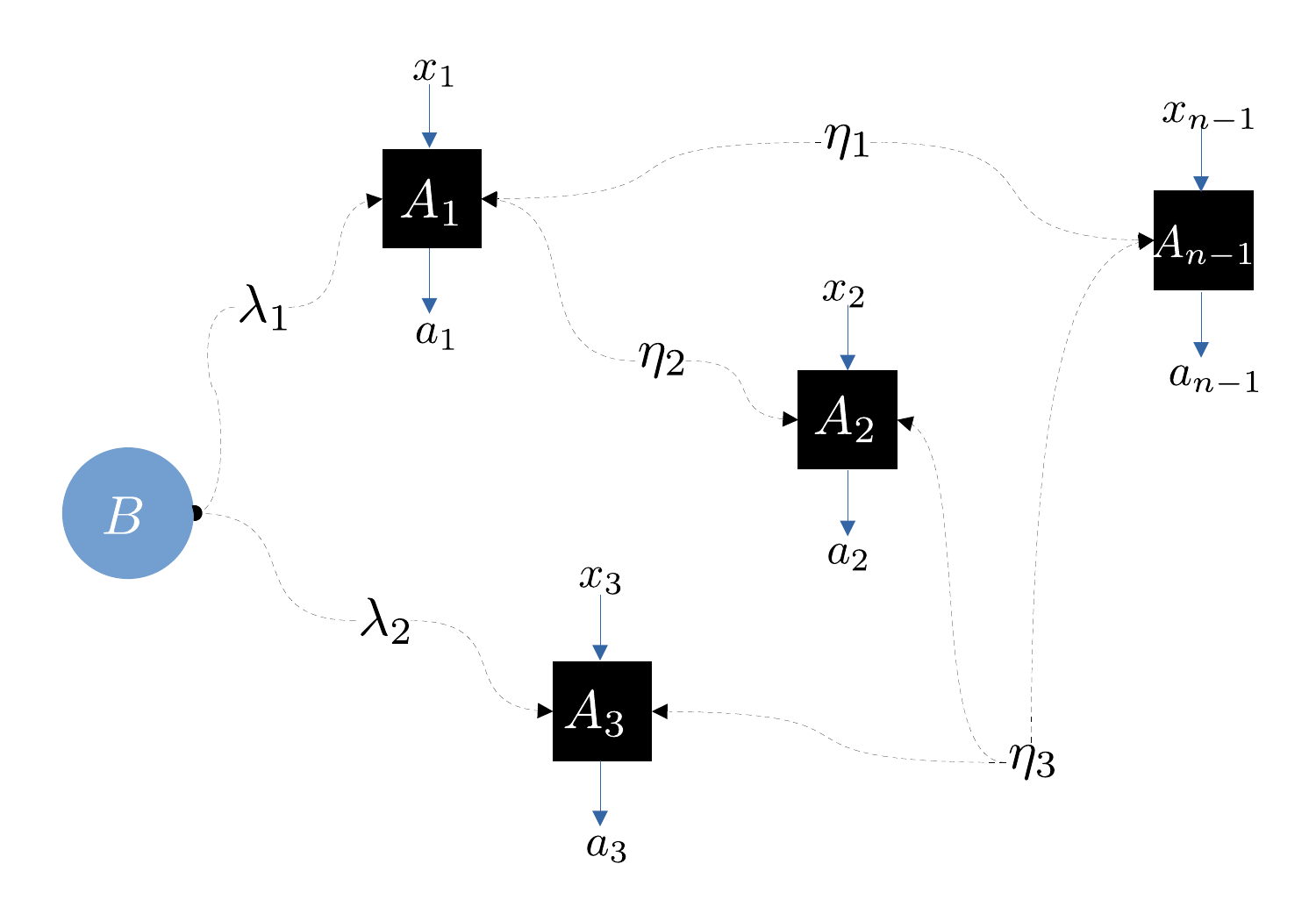}
    \caption{{\bf Network unsteerable assemblages} are defined by admitting a \emph{Network-local-hidden-state model} (here pictured for a single trusted party): a network of $n-1$ black-box parties $A_i$ and one trusted $B$ perform an experiment arranged in a given no-signaling network, resulting in the assemblage $\tilde{\sigma}_B^{\vec{a}|\vec{x}}$ describing both the statistics of the black-box $A_i$s and the corresponding quantum state in $B$. In an NLHS model, the hidden (classical) variables $\lambda_i$'s are shared among some of the $A_i$'s and $B$, and the hidden variables $\eta_i$'s are shared among different $A_i$ only. The resulting assemblage can then be described by Eq.~\eqref{eq:NLHS}. Experiments using quantum states instead of the classical variables $\{\vec{\lambda},\vec{\eta}\}$ can violate the decomposition in Eq.~\eqref{eq:NLHS} and are then said to feature network steering.}
    \label{fig:net_steer_sketch}
\end{figure}

An NLHS model for an assemblage in this scenario corresponds to admitting a decomposition compatible with the distribution of unobserved  \emph{classical} variables (also known as {\em hidden variables}) in the given network.
Physically, this means Bob's assemblage can be prepared without sharing any quantum correlation with any $A_i$'s at all.
Mathematically, an assemblage $\tilde{\sigma}_B^{\vec{a}|\vec{x}}$ is said to admit such an NLHS model if 
\begin{align}
\label{eq:NLHS}
    \tilde{\sigma}_B^{\vec{a}|\vec{x}}&=\sum_{\vec{\lambda}} \sigma^{(\vec{\lambda})}_B p_{\loc}(\vec{a}|\vec{x},\vec{\lambda})\mu(\vec{\lambda})\;,
\\ \nonumber
    \vec{\lambda}&=(\lambda_{1},\lambda_{2},\dots,\lambda_{l_B})\;, \quad \mu(\vec{\lambda}){\coloneqq}\prod_i^{l_B}\mu_i(\lambda_i).
\end{align}
Here, $\vec{\lambda}$ is the vector representing $l_B$ hidden variables $\lambda_1,...,\lambda_{l_B}$.
Each $\lambda_i$ is shared between $B$ and a subset of untrusted parties, 
with an independent distribution $\mu_i$.
It follows that \mbox{$l_B\leq n-1$} (see Fig.~\ref{fig:net_steer_sketch} for a schematic illustration). 
Also, each $\sigma^{(\vec{\lambda})}_B$ is a normalized state (namely, $\sigma^{(\vec{\lambda})}_B$ is positive and with unit trace). 
Finally, each conditional probability distribution $p_\loc(\vec{a}|\vec{x},\vec{\lambda})$ is {\em network-local} according to the network scenario restricted to the $A_i$'s only, namely, with a decomposition of the following form
for each value of $\vec{\lambda}$:
\begin{align}
\label{eq:local_dist}
    & p_{\loc}(\vec{a}|\vec{x},\vec{\lambda})=\\
    & \sum_{\vec{\eta}} p(a_1|x_1,\vec{\eta}^{(1)},\vec{\lambda}^{(1)}) \dots  p(a_{n-1}|x_{n-1},\vec{\eta}^{(n-1)},\vec{\lambda}^{(n-1)})\nu(\vec{\eta})\;,
    \nonumber
\\
    & \vec{\eta}=(\eta_{1},\eta_{2},\dots,\eta_{l_A})\;, \quad \nu(\vec{\eta}){\coloneqq}\prod_i^{l_A}\nu_i(\eta_i).
    \nonumber
\end{align}
Here, $\vec{\eta}$ is
a set of $l_A$ hidden variables $\eta_1,...,\eta_{l_A}$ shared only among the untrusted parties $A_i$'s, and each $\eta_i$ is sampled according to an independent probability distribution $\nu_i$.
Moreover, each $\vec{\eta}^{(i)}$ ($\vec{\lambda}^{(i)}$) is a subset of $\vec{\eta}$ ($\vec{\lambda}$) reaching $A_i$, as in the context of network nonlocality~\cite{Tavakoli2022Bell}.
That is, more formally, $\vec{\eta}^{(i)}=\{\eta_{k}\,|\,\eta_{k}\;\text{is shared by $A_i$}\}$, and similarly for $\vec{\lambda}^{(i)}$.

When the decomposition in Eq.~\eqref{eq:NLHS} cannot be found, the resulting assemblage in the experiment presents network steering~\cite{jones2021network}, which reduces to the standard notion of (bipartite) steering for $n=2$.
Formally, we have the following definition:
\begin{definition}[Network quantum steering]
    An assemblage $\tilde{\sigma}_B^{\vec{a}|\vec{x}}$ that cannot be decomposed as from Eq.~\eqref{eq:NLHS} is said to be steerable in the assigned network.
\end{definition}

A few observations are in order. 
First, notice that the decomposition in Eq.~\eqref{eq:NLHS} fails already if the marginal distribution $p(\vec{a}|\vec{x})$ is nonlocal in the network obtained by removing $B$, or if any of the semi-marginalized states on some subset \mbox{$\mathcal{S}\subset\{1,\dots,n-1\}$} of the untrusted parties, {that is,} \mbox{$\sum_{\vec{a}|_{\bar{\mathcal{S}}}} \sigma^{\vec{a}|\vec{x}}_B{\eqqcolon}\sigma^{\vec{a}|\vec{x}}_B(\mathcal{S})$,} is steerable in the standard bipartite sense (namely, considering all $A_i$'s in $\mathcal{S}$ to be a joint, single party).
Further, in the more general case with $m\neq 1$~\cite{jones2021network}, entanglement among different trusted parties, denoted by $B_i$'s, makes the NLHS model fail as well. 
We keep this as additional motivation to restrain ourselves to $m=1$ in this work.

\subsection{MDI detection of all network-steerable assemblages 
}
Crucially, in a network steering setting, Bob's local hardware is fully trusted in order to tomographycally assess $\tilde{\sigma}_B^{\vec{a}|\vec{x}}$.
It is then useful to know whether we can further reduce such assumption on Bob's side, and
ask whether trusting Bob's measurement device is really necessary; namely:
\begin{center}
{\em
Is it possible to upgrade the detection of any network-steerable assemblage to an MDI scenario? 
}
\end{center}
This question has been answered, in the positive, for entanglement detection (both finite dimensional and infinite dimensional~\cite{buscemi2012all,branciard2013measurement,abiuso2021measurement}), as well as for finite-dimensional bipartite steering~\cite{cavalcanti2013entanglement,zhao2020experimental}, see Table~\ref{tab:context} for a visual summary and context. Here, we prove that the same is true in full generality for steering, namely \emph{(i)} in any network scenario and \emph{(ii)} regardless of the underlying Hilbert space dimension.
In the following, we provide a simple argument and its proof in the discrete-variable case, sparing the CV proof to the technical material in Appendix~\ref{app:steerproof}.

\begin{figure}
    \centering
    \includegraphics[width=\linewidth]{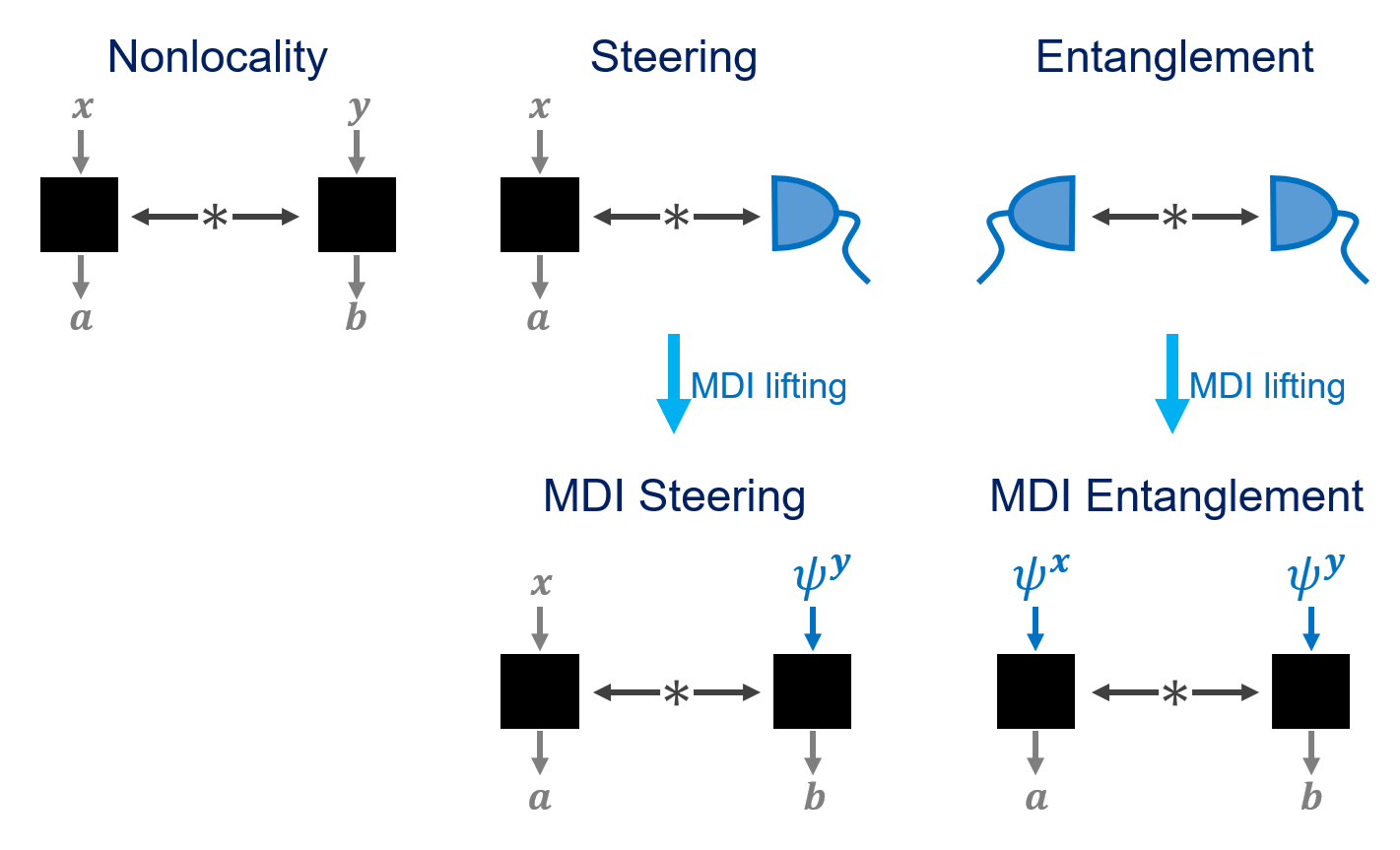}
    \begin{tabular}{c|c|c}
     &\quad\quad  MDI Steering \quad\ &\quad MDI Entanglement \\ \hline
     Discrete variable    & \cite{cavalcanti2013entanglement,zhao2020experimental} & \cite{buscemi2012all,branciard2013measurement} \\
     Continuous variable    & This work & \cite{abiuso2021measurement}
    \end{tabular}
    \caption{{\bf Hierarchy of possible device-trust in the bipartite scenario.}  
    \emph{i)} Nonlocality, \emph{ii)} steering, and \emph{iii)} entanglement correspond to witnessing correlations that cannot be explained by a classical shared variable, respectively, in experiments in which \emph{i)} no party is trusted, \emph{ii)} one party is trusted, \emph{iii)} both parties are trusted.  The figure shows scenarios where a quantum source emits, e.g., a pair of photons, which are sent to two distant agents. To certify quantumness, for trusted parties, we can simply trust data from their photon detectors. 
    For untrusted parties, which are viewed as back-boxes, one can only collect the resulting statistics (with local inputs $x,y$ and outputs $a,b$). Steering and entanglement witnesses can then be lifted to measurement-device-independent (MDI) protocols in which the trusted hardware is reduced to quantum inputs only (i.e., by replacing the trusted detectors by black-boxes with trusted quantum inputs $\psi^x,\psi^y$). Notice that by definition, there is no MDI lifting for nonlocality (as there is no trusted hardware).}.
    \label{tab:context}\label{fig:bipartite table}
\end{figure}


The underlying intuition is close in spirit to the derivations (for MDI entanglement witnessing) of Refs.~\cite{branciard2013measurement,vsupic2017measurement,abiuso2021measurement}:
starting from the network steering scenario described above, consider the case in which Bob's measurement station is also taken as a black-box, while only assuming a set of possible trusted quantum inputs (denoted by $\psi^y$) and the validity of quantum mechanics. This means that the statistics of the experiment are represented by
\begin{align}
    p(\vec{a},b|\vec{x},\psi^y)=\Tr\left[{M^{b}_{BB'} (\tilde{\sigma}_B^{\vec{a}|\vec{x}}\otimes\psi^y_{B'}})\right]\;,
    \label{eq:mdi_exp}
\end{align}
in which both the assemblage $\tilde{\sigma}_B^{\vec{a}|\vec{x}}$ and the local measurement $M_{BB'}^b$ are {\em unknown}; that is, we do not assume trust in Bob's measurement device.
Eq.~\eqref{eq:mdi_exp} then defines an effective measurement on the inputs $\psi^{y}$ via
\begin{align}\label{Eq: MDI steering Bob's local stat}
p(\vec{a},b|\vec{x},\psi^y){\eqqcolon}\Tr[\mathcal{M}^{b,\vec{a}|\vec{x}}_{B'} \psi^y_{B'}]\;,
\end{align}  
that is given by
\begin{align}
    \mathcal{M}_{B'}^{b,{\vec{a}|\vec{x}}}\coloneqq\Tr_B\left[M^{b}_{BB'} {\left(\tilde{\sigma}_B^{\vec{a}|\vec{x}}\otimes\mathbb{I}_{B'}\right)}\right]\;.
    \label{eq:eff_meas}
\end{align}
Hence, by varying the inputs $\psi^y$, one can potentially use the above statistics to completely determine $\mathcal{M}_{B'}^{b,{\vec{a}|\vec{x}}}$.

Now, a crucial observation is: in the case where $\tilde{\sigma}_B^{\vec{a}|\vec{x}}$ is network unsteerable [Eq.~\eqref{eq:NLHS}], each POVM element $ \mathcal{M}_{B'}^{b,{\vec{a}|\vec{x}}}$ (whose measurement outputs are labelled by the index $b$) will have a decomposition akin to Eq.~\eqref{eq:NLHS}. 
In fact,
\begin{align}
\nonumber
    {\mathcal{M}_{B'}^{b,{\vec{a}|\vec{x}}}}&=\Tr_B\left[M^{b}_{BB'} \left(\tilde{\sigma}_B^{\vec{a}|\vec{x}}\otimes\mathbb{I}_{B'}\right)\right]\nonumber\\
    &=\sum_{\vec{\lambda}} \Tr_B\left[M^{b}_{BB'} \left(\sigma_B^{(\vec{\lambda})}\otimes\mathbb{I}_{B'}\right)\right] p_\loc(\vec{a}|\vec{x},\vec{\lambda})\mu(\vec{\lambda})\;\nonumber\\
    &\eqqcolon\sum_{\vec{\lambda}} \mathcal{M}^{b|\vec{\lambda}}_{B'} p_\loc(\vec{a}|\vec{x},\vec{\lambda})\mu(\vec{\lambda}),
\label{eq:povm_NLHS}
\end{align}
where we have defined $\mathcal{M}^{b|\vec{\lambda}}_{B'}\coloneqq\Tr_B\left[M^{b}_{BB'}\left(\sigma_B^{(\vec{\lambda})}\otimes\mathbb{I}_{B'}\right)\right]$, which 
induce an assemblage by fixing the value of $b$.
To see this, first, note that
the only difference between Eqs.~\eqref{eq:NLHS} and~\eqref{eq:povm_NLHS} is that, for a fixed $b$, the normalization $\sum_{\vec{a}}\Tr[ \mathcal{M}_{B'}^{b,{\vec{a}|\vec{x}}}]=\Tr[{M}_{BB'}^{b}\left(\sigma_B\otimes\mathbb{I}_{B'}\right)]\neq 1$ is in general different. However, notice that such a normalization can be reconstructed by
the statistics in Eq.~\eqref{Eq: MDI steering Bob's local stat} via a suitably chosen set of states $\psi^y$~
\footnote{
More precisely, by selecting an orthonormal basis $\psi^y$, this normalization factor reads
$
\Tr[{M}_{BB'}^{b}\left(\sigma_B\otimes\mathbb{I}_{B'}\right)]=\sum_{\vec{a},y}p(\vec{a},b|\vec{x},\psi^y),
$
which again can be obtained without trusting Bob's measurement device.}.
This thus means that:
\begin{proposition} 
\label{prop:meas_unsteer}
For a given assemblage $\tilde{\sigma}_B^{\vec{a}|\vec{x}}$ in a network steering experiment, the measurement $M^{b}_{BB'}$  defines an associated assemblage
\begin{align}
\tilde{m}_{B'}^{b,\vec{a}|\vec{x}}:= \begin{cases}\frac{\mathcal{M}_{B'}^{b,\vec{a}|\vec{x}}}{\Tr[{M}_{BB'}^{b}\left(\sigma_B\otimes\mathbb{I}_{B'}\right)]} & {\text{if}}\;\Tr[{M}_{BB'}^{b}(\sigma_B\otimes\mathbb{I}_{B'})]\neq 0\;,\\
0 & {\text{if}}\; \Tr[{M}_{BB'}^{b}(\sigma_B\otimes\mathbb{I}_{B'})]= 0.
\end{cases}
\label{eq:box_unsteer}
\end{align}
If the assemblage $\tilde{\sigma}_B^{\vec{a}|\vec{x}}$ is network unsteerable, then, for each value of $b$, $\tilde{m}_{B'}^{b,\vec{a}|\vec{x}}$ is unsteerable in the same network.
\end{proposition}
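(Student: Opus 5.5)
The plan is to start from the NLHS decomposition in Eq.~\eqref{eq:NLHS}, push it through the effective measurement as in Eq.~\eqref{eq:povm_NLHS}, and then renormalize so that it has exactly the form of Eq.~\eqref{eq:NLHS}. Fix $b$ and write $N_b\coloneqq\Tr[M^b_{BB'}(\sigma_B\otimes\mathbb{I}_{B'})]$. If $N_b=0$, then $\tilde{m}^{b,\vec{a}|\vec{x}}_{B'}=0$, which we treat as trivially unsteerable; equivalently, the branch $b$ never occurs. One remark is needed in this case: since $\mathcal{M}^{b,\vec a|\vec x}_{B'}\ge0$ and its total trace over $\vec a$ is $N_b$ times the dimension (or, in CV, its action on any state is bounded by $N_b$), it vanishes anyway. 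So assume $N_b>0$.

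\textbf{Key steps.} First, from Eq.~\eqref{eq:povm_NLHS} we have $\mathcal{M}^{b,\vec a|\vec x}_{B'}=\sum_{\vec\lambda}\mathcal{M}^{b|\vec\lambda}_{B'}\,p_\loc(\vec a|\vec x,\vec\lambda)\mu(\vec\lambda)$, where $\mathcal{M}^{b|\vec\lambda}_{B'}\ge0$ because it is a partial trace of a positive operator against a positive one. Define $w_b(\vec\lambda)\coloneqq\Tr[M^b_{BB'}(\sigma^{(\vec\lambda)}_B\otimes\tfrac{\mathbb{I}}{d}_{B'})]$. Because the intended normalization is the one fixed by the paper, I would instead set
\[
c_b(\vec\lambda)\coloneqq\Tr[M^b_{BB'}(\sigma^{(\vec\lambda)}_B\otimes\mathbb{I}_{B'})]=\Tr[\mathcal{M}^{b|\vec\lambda}_{B'}],
\]
so that $\sum_{\vec\lambda}c_b(\vec\lambda)\mu(\vec\lambda)=N_b$, using $\sum_{\vec a}p_\loc=1$ and $\sum_{\vec\lambda}\mu\,\sigma^{(\vec\lambda)}=\sigma_B$. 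Then we define normalized states $\hat\sigma^{(\vec\lambda)}_{B'}\coloneqq\mathcal{M}^{b|\vec\lambda}_{B'}/c_b(\vec\lambda)$ whenever $c_b(\vec\lambda)>0$, and a new weight $\mu'(\vec\lambda)\coloneqq c_b(\vec\lambda)\mu(\vec\lambda)/N_b$. This gives
\[
\tilde m^{b,\vec a|\vec x}_{B'}=\sum_{\vec\lambda}\hat\sigma^{(\vec\lambda)}_{B'}\,p_\loc(\vec a|\vec x,\vec\lambda)\,\mu'(\vec\lambda),
\]
with $p_\loc$ unchanged and therefore still network-local in the sense of Eq.~\eqref{eq:local_dist}.

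\textbf{Main obstacle.} The delicate point is that Eq.~\eqref{eq:NLHS} requires $\mu$ to factorize as $\prod_i\mu_i(\lambda_i)$, and the reweighting by $c_b(\vec\lambda)$ generally destroys this independence. To handle it, I would absorb the $\vec\lambda$-dependence into the trusted state rather than into the weights. Concretely, the plan is to introduce a new hidden variable, or to enlarge an existing one, say $\lambda_1'=(\lambda_1,\dots,\lambda_{l_B})$ shared by $B$ alone. That does not work directly either, because the $A_i$ need their own $\lambda^{(i)}$.

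The alternative I would try first is to keep the original product $\mu$ and note that only $B$'s state needs to depend on $b$. The obstruction is a positive but non-normalized factor $c_b(\vec\lambda)$, and it can only be removed by a rescaling that is not uniform in $\vec\lambda$. If that fails, I would restrict the claim to what the renormalization actually preserves: the trace normalization of the derived assemblage is $\sum_{\vec a}\tilde m=\sum_{\vec\lambda}\mu' \hat\sigma$. I would then check whether the NLHS definition tolerates correlated $\mu'$ for the network at hand, as it does when $l_B=1$, which includes the bipartite case. For general networks, I expect this factorization issue to be the genuine crux, and I would examine whether the independence of the sources reaching $B$ can be restored by letting $B$ hold the full $\vec\lambda$ while $p_\loc$ uses only its marginal components.
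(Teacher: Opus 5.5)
Your first steps are exactly the paper's argument. You push Eq.~\eqref{eq:NLHS} through $M^b_{BB'}$ to get Eq.~\eqref{eq:povm_NLHS}, then divide by the constant $N_b$. The paper stops there, asserting that the global normalization is ``the only difference'' from Eq.~\eqref{eq:NLHS}. It never turns $\mathcal{M}^{b|\vec\lambda}_{B'}/N_b$, whose trace is $c_b(\vec\lambda)/N_b$ rather than $1$, into normalized states. So it silently performs the non-product reweighting $\mu\mapsto\mu'$ that you correctly refused to accept. What you have written is a complete proof in two cases: whenever $l_B=1$ (the bipartite case, or the line network with $B$ at an end), and whenever $c_b$ does not depend on $\vec\lambda$ (e.g.\ the projector on $\ket{\psi^+}$, where $c_0=1/d$). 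For general networks it proves a weaker statement: $\tilde m^{b,\vec a|\vec x}_{B'}$ is unsteerable in the coarser network where $\lambda_1,\dots,\lambda_{l_B}$ are merged into one source shared by $B$ and every $A_i$ they reach. This works because $\mu'(\vec\lambda)\nu(\vec\eta)$ still factorizes between $\vec\lambda$ and $\vec\eta$, and each $A_i$ still reads only $\vec\lambda^{(i)}$.

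The remaining gap for $l_B\ge 2$ cannot be closed by any of your proposed repairs, including letting $B$ hold all of $\vec\lambda$, because the statement is false there: conditioning on $b$ is conditioning on a collider. A counterexample:
\begin{itemize}
\item Network $A_1$--$B$--$A_2$, with independent uniform bits $\lambda_1$ (shared by $A_1,B$) and $\lambda_2$ (shared by $A_2,B$).
\item $B=B_1B_2$ is two qubits, $\sigma^{(\lambda_1,\lambda_2)}_B=|\lambda_1\lambda_2\rangle\langle\lambda_1\lambda_2|$ and $p(a_i|x_i,\lambda_i)=\delta_{a_i,\lambda_i}$. Then $\tilde\sigma^{a_1a_2|\vec x}_B=\frac14|a_1a_2\rangle\langle a_1a_2|$ is network unsteerable.
\item $B'$ is a qubit, with the product POVM $M^0_{BB'}=(|00\rangle\langle 00|+|11\rangle\langle 11|)_B\otimes\mathbb{I}_{B'}$ and $M^1_{BB'}=\mathbb{I}-M^0_{BB'}$.
\end{itemize}
One gets $N_0=1$, $c_0(\lambda_1,\lambda_2)=2\delta_{\lambda_1\lambda_2}$, and $\tilde m^{0,a_1a_2|\vec x}_{B'}=\frac14\delta_{a_1a_2}\mathbb{I}_{B'}$. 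Its $A$-marginal $p(a_1,a_2)=\frac12\delta_{a_1a_2}$ is perfectly correlated. But every NLHS model of this network has $p(a_1,a_2|x_1,x_2)=p(a_1|x_1)p(a_2|x_2)$, which is closed in trace norm, so $\tilde m^{0}\notin\overline{\mathcal{U}}$. The correlation sits in the $A$-marginal $\sum_{\vec\lambda}\mu'(\vec\lambda)p_{\loc}(\vec a|\vec x,\vec\lambda)$, so no reassignment of $B$'s states helps. The proposition, and with it the soundness half of the paper's MDI claim, needs either the restriction $l_B=1$ or the merged network.

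A minor slip: in the $N_b=0$ branch, $\sum_{\vec a}\Tr[\mathcal{M}^{b,\vec a|\vec x}_{B'}]$ equals $N_b$ exactly, with no dimension factor.
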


A crucial role in the shaping of $\mathcal{M}_{B'}^{b,\vec{a}|\vec{x}}$ in Eq.~\eqref{eq:eff_meas} and the corresponding $\tilde{m}_{B'}^{b,\vec{a}|\vec{x}}$ is played by Bob's measurement $M^{b}_{BB'}$. In particular, in case the latter is local, the possibility of MDI-steering must rely on the nonlocality of $p(\vec{a}|\vec{x})$ defined in Eq.~\eqref{Eq:p(a|x)} alone. 
In fact, consider $M_{BB'}^b$ to be carried out locally, aided by classical shared  randomness $\lambda$ and classical postprocessing $f(b|c_1,c_2,\lambda)$, 
\begin{align}
\label{eq:loc_meas}
    M_{BB'}^{b}=\sum_{c_1,c_2,\lambda}\mu(\lambda) f(b|c_1,c_2,\lambda) M_B^{c_1|\lambda}\otimes M_{B'}^{c_2|\lambda}\;,
\end{align}
it then follows that
\begin{proposition}
\label{prop:loc_meas_unsteer}
If Bob only applies local measurements of the form~\eqref{eq:loc_meas}, then the resulting 
$\tilde{m}_{B'}^{b,\vec{a}|\vec{x}}$ is unsteerable in the bipartition $\vec{A}|B$. As such, any MDI experiment can always be explained via an underlying bipartite unsteerable assemblage.
\end{proposition}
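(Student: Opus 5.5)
The plan is to substitute the local form~\eqref{eq:loc_meas} of $M^b_{BB'}$ directly into the definition~\eqref{eq:eff_meas} of $\mathcal{M}^{b,\vec a|\vec x}_{B'}$. The partial trace over $B$ then acts only on the factor $M_B^{c_1|\lambda}$, which gives
\begin{align}
\mathcal{M}^{b,\vec a|\vec x}_{B'}=\sum_{c_1,c_2,\lambda}\mu(\lambda)f(b|c_1,c_2,\lambda)\,\Tr\!\left[M_B^{c_1|\lambda}\tilde{\sigma}_B^{\vec a|\vec x}\right] M_{B'}^{c_2|\lambda}.
\nonumber
\end{align}
The key observation is that $\Tr[M_B^{c_1|\lambda}\tilde\sigma_B^{\vec a|\vec x}]\eqqcolon p(\vec a,c_1|\vec x,\lambda)$ is a classical joint distribution. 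We can therefore write $p(\vec a,c_1|\vec x,\lambda)=p(\vec a|\vec x)\,p(c_1|\vec a,\vec x,\lambda)$, using that $\sum_{c_1}M_B^{c_1|\lambda}=\mathbb{I}$ and $\Tr[\tilde\sigma_B^{\vec a|\vec x}]=p(\vec a|\vec x)$ from Eq.~\eqref{Eq:p(a|x)}. The whole dependence on $(\vec a,\vec x)$ is then carried by classical data, while the operator on $B'$ depends only on the hidden variables $(c_1,\lambda)$.

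Next I will identify an LHS model in the bipartition $\vec A|B$. I take the hidden variable to be $\xi=(c_2,\lambda)$. The hidden states are the normalized $\rho^{(\xi)}_{B'}\propto M^{c_2|\lambda}_{B'}$, i.e. $M^{c_2|\lambda}_{B'}/\Tr[M^{c_2|\lambda}_{B'}]$; in the CV case I would use normalizable effects or a limiting argument. The response function is the joint classical response of the grouped $\vec A$ side, namely
\begin{align}
\sum_{c_1}p(\vec a,c_1|\vec x,\lambda)\,f(b|c_1,c_2,\lambda),
\nonumber
\end{align}
which depends on $\vec x$, $\vec a$ and $\xi$ but carries no quantum dependence on $B'$. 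All remaining weights, together with the normalization $\Tr[M^b_{BB'}(\sigma_B\otimes\mathbb I_{B'})]$ in Eq.~\eqref{eq:box_unsteer}, are absorbed into the hidden-variable distribution.

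Finally, I need to check that the resulting $\tilde m^{b,\vec a|\vec x}_{B'}$ has the standard form $\sum_\xi q(\xi)\,p(\vec a|\vec x,\xi)\,\rho^{(\xi)}_{B'}$. For this, $p(\vec a|\vec x,\xi)$ must be a valid conditional distribution, normalized over $\vec a$ for every $\vec x$, and $q$ must not depend on $\vec x$. This is the step I expect to need the most care. The $b$-dependence enters through $f$, so after fixing $b$ the quantity $\sum_{\vec a}(\cdot)$ is $\sum_{c_1}\Tr[M_B^{c_1|\lambda}\sigma_B]f(b|c_1,c_2,\lambda)$. By no-signalling, $\sum_{\vec a}\tilde\sigma_B^{\vec a|\vec x}=\sigma_B$ is independent of $\vec x$, so this marginal is $\vec x$-independent. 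I therefore define
\begin{align}
q(\xi)\propto \mu(\lambda)\Tr[M^{c_2|\lambda}_{B'}]\sum_{c_1}\Tr[M_B^{c_1|\lambda}\sigma_B]f(b|c_1,c_2,\lambda)
\nonumber
\end{align}
and let $p(\vec a|\vec x,\xi)$ be the ratio, which is then properly normalized.

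The consequence is that the assemblage obtained by grouping all $A_i$ is bipartite unsteerable. Any MDI violation must therefore come from the nonlocality of $p(\vec a|\vec x)$ among the $A_i$ alone, rather than from steering of $B$, which is the second sentence of the proposition.
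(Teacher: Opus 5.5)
Your argument is correct and is essentially the paper's. You substitute the product form~\eqref{eq:loc_meas}, let the partial trace over $B$ turn the assemblage into the classical data $\Tr[M_B^{c_1|\lambda}\tilde\sigma_B^{\vec a|\vec x}]$, and read off an LHS model in $\vec A|B$ whose weights are $\vec x$-independent by no-signalling. The only difference is bookkeeping: the paper takes $(c_1,\lambda)$ as the hidden variable, with response $p(\vec a|\vec x,c_1,\lambda)$ and hidden (unnormalized) state $\sum_{c_2}f(b|c_1,c_2,\lambda)M_{B'}^{c_2|\lambda}$, whereas you take $(c_2,\lambda)$ and move $c_1$ and $f$ into the response function, which works equally well for each fixed $b$.
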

The proposition follows from the decomposition
\begin{align}
    \Tr_B\left[M^{b}_{BB'} {\left(\tilde{\sigma}_B^{\vec{a}|\vec{x}}\otimes\mathbb{I}_{B'}\right)}\right]
    &= \sum_{c_1,\lambda} \mu(\lambda)
    {p(c_1,\vec{a}|\vec{x},\lambda)}\mathcal{M}_{B'}^{b|c_1,\lambda}\;
\end{align}
arising from Eq.~\eqref{eq:loc_meas} with
\begin{align}
\nonumber
    p(c_1,\vec{a}|\vec{x},\lambda) &\coloneqq\Tr[M_{B}^{c_1|\lambda}\tilde{\sigma}_{B}^{\vec{a}|\vec{x}}]=p(c_1|\lambda)p(\vec{a}|\vec{x},c_1,\lambda)\;,
\\ 
    \mathcal{M}_{B'}^{b|c_1,\lambda} &\coloneqq\sum_{c_2} f(b|c_1,c_2,\lambda) M_{B'}^{c_2|\lambda}\;.
\end{align}
Proposition~\ref{prop:loc_meas_unsteer} shows that separable measurements prevent from detecting steering in an MDI scenario, implying that any MDI-steering detection must arise from an underlying entangled measurements in Bob's lab. 

Crucially, by using entangled measurements, we can show that a counterpart of Proposition~\ref{prop:meas_unsteer} is also possible, namely
\begin{theorem}
\label{theo:main_network}
Given a network steerable assemblage $\tilde{\sigma}_B^{\vec{a}|\vec{x}}$, there exists
a measurement with a POVM element, 
without loss of generality,
$M^{b=0}_{BB'}$, such that the corresponding assemblage $\tilde{m}_{B'}^{b=0,\vec{a}|\vec{x}}$ is steerable in the same network.
\end{theorem}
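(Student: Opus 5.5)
The plan is to take Bob's measurement to be a projection onto a maximally entangled state between $B$ and an input register $B'$ isomorphic to $B$. This is the standard construction behind MDI entanglement witnesses~\cite{buscemi2012all,branciard2013measurement}. Concretely, let $d$ be the dimension of $B$ and let $\ket{\Phi^+}_{BB'}=\frac{1}{\sqrt d}\sum_k\ket{k}_B\ket{k}_{B'}$. We set
\begin{align}
M^{0}_{BB'}=\proj{\Phi^+}_{BB'},\qquad M^{1}_{BB'}=\mathbb{I}_{BB'}-M^{0}_{BB'}.
\end{align}
The key identity is $\Tr_B[\proj{\Phi^+}(X_B\otimes\mathbb{I}_{B'})]=\frac{1}{d}X^{T}_{B'}$, with the transpose taken in the basis $\{\ket{k}\}$. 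From Eq.~\eqref{eq:eff_meas} this gives $\mathcal{M}^{0,\vec{a}|\vec{x}}_{B'}=\frac{1}{d}(\tilde{\sigma}_B^{\vec{a}|\vec{x}})^T$. The normalization is $\Tr[M^0(\sigma_B\otimes\mathbb{I})]=\frac1d\Tr\sigma_B=\frac1d\neq0$. Hence Eq.~\eqref{eq:box_unsteer} yields simply $\tilde{m}_{B'}^{0,\vec{a}|\vec{x}}=(\tilde{\sigma}_B^{\vec{a}|\vec{x}})^T$, the transposed assemblage.

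The remaining step is to show that transposition preserves network steerability. We argue by contraposition. Suppose $(\tilde{\sigma}^{\vec{a}|\vec{x}})^T$ admitted an NLHS decomposition~\eqref{eq:NLHS} with states $\sigma^{(\vec\lambda)}$. Transposition is linear and involutive, and it maps normalized positive operators to normalized positive operators. Applying it to both sides then gives an NLHS decomposition of $\tilde{\sigma}^{\vec{a}|\vec{x}}$, with the same $\mu$ and the same network-local $p_\loc$ of the form~\eqref{eq:local_dist}, and with states $(\sigma^{(\vec\lambda)})^T$. This contradicts steerability. No-signalling is also preserved, since $\sum_{\vec a}(\tilde\sigma^{\vec a|\vec x})^T=\sigma_B^T$. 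So $\tilde{m}^{0,\vec a|\vec x}_{B'}$ is steerable in the same network, which proves the statement for finite $d$.

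The main obstacle is the continuous-variable case, where $\ket{\Phi^+}$ is not normalizable and cannot serve as a POVM element. There I would first try two-mode squeezed vacua $\ket{\Phi_r}\propto\sum_n\tanh^n r\ket{n}\ket{n}$. These give $\mathcal{M}^{0}\propto (S_r\,\tilde\sigma\,S_r)^T$ with $S_r=\sum_n\tanh^n r\proj{n}$, a positive invertible filter. The map $X\mapsto S_rXS_r/\Tr[S_r\sigma_B S_r]$ applied to the whole assemblage is a local invertible filtering. Its inverse is not a physical operation, but NLHS structure is still preserved under it after renormalizing each hidden state: in the rewritten decomposition the weights become $\mu(\vec\lambda)\Tr[S_r\sigma^{(\vec\lambda)}S_r]/\Tr[S_r\sigma_BS_r]$. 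I expect this renormalization to break the product form $\mu(\vec\lambda)=\prod_i\mu_i(\lambda_i)$ required in the network setting. I would therefore instead argue via a limit $r\to\infty$, using closedness of the NLHS set, or via a direct witness-based argument; this is the technical point deferred to Appendix~\ref{app:steerproof}.
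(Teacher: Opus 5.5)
Your finite-dimensional argument is correct and is the same as the paper's. Projecting onto $\ket{\Phi^+}$ gives $\tilde m^{0,\vec a|\vec x}_{B'}=(\tilde\sigma_B^{\vec a|\vec x})^\intercal$, and transposition turns any NLHS decomposition into another one with the same $\mu$ and $p_\loc$. The gap is the continuous-variable case, which the theorem is meant to cover and which you leave as a plan. Your reason for abandoning exact inversion of the filter $S_r=\tanh(r)^N$ is sound. Undoing it reweights $\mu(\vec\lambda)$ by $\Tr[S_r^{-2}\omega^{(\vec\lambda)}]$, where $\omega^{(\vec\lambda)}$ are the hidden states of a putative model for $\tilde m^0_r$, and this generally breaks the product form $\prod_i\mu_i(\lambda_i)$ when $l_B\geq 2$.

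The replacement you name, however, has its topological ingredient misidentified. Closedness of the NLHS set $\mathcal U$ is neither known nor what is needed. If steerability only meant $\notin\mathcal U$, then $(\tilde\sigma)^\intercal$ could lie in $\overline{\mathcal U}\setminus\mathcal U$, and a limit argument would give nothing. The paper instead defines CV network steerability as $\tilde\sigma\notin\overline{\mathcal U}$ (trace-norm closure). So there is $\varepsilon_*>0$ such that the $\varepsilon_*$-ball around $\tilde\sigma$ misses $\overline{\mathcal U}$. Since the Fock-basis transpose is a trace-norm isometry mapping $\mathcal U$ onto itself, the $\varepsilon_*$-ball around $(\tilde\sigma)^\intercal$ also misses $\overline{\mathcal U}$.

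The second missing ingredient is the quantitative convergence $\tilde m^{0}_r\to(\tilde\sigma)^\intercal$ in the assemblage norm $\sup_{\vec x}\sum_{\vec a}\|\cdot\|_1$ as $r\to\infty$. The paper obtains it from the gentle measurement lemma with $e^{-\beta}=\tanh r$. Write $\tilde m^{0,\vec a|\vec x}_r=S_rX_{\vec a}S_r/c_r$, with $X_{\vec a}=(\tilde\sigma^{\vec a|\vec x})^\intercal$. The normalization $c_r=\Tr[S_r^2\sigma_B]$ is common to all elements rather than each element's own trace, so the estimate is best done for the whole assemblage. For $X\geq 0$ one has $\|S_rXS_r-X\|_1\leq 2\|(\mathbb{I}-S_r)X\|_1\leq 2\sqrt{\Tr[(\mathbb{I}-S_r)^2X]\,\Tr X}$, and Cauchy--Schwarz over $\vec a$ gives
\[
\sum_{\vec a}\|\tilde m^{0,\vec a|\vec x}_r-X_{\vec a}\|_1\leq (1-c_r)+2\sqrt{\Tr[(\mathbb{I}-S_r)^2\sigma_B]}
\]
uniformly in $\vec x$. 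Both terms vanish as $r\to\infty$ by dominated convergence in the Fock basis, so not even a finite-energy assumption is required. Choosing $r$ so that the right-hand side is below $\varepsilon_*$ places $\tilde m^{0}_r$ outside $\overline{\mathcal U}$, which is the CV step your proposal still needs.
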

Moreover (cf. below) it also follows that any witness certifying steering of the assemblage $\tilde{\sigma}^{\vec{a}|\vec{x}}_B$ can be used to detect the steerability of $\tilde{m}_{B'}^{b=0,\vec{a}|\vec{x}}$,
as any observable on the latter can be estimated using a basis of (possibly tomographically complete) inputs $\psi^y$ in Eq.~\eqref{eq:mdi_exp}. 
In summary, from the combination of Proposition~\ref{prop:meas_unsteer}
and Theorem~\ref{theo:main_network}, we conclude that
\begin{center}
{\em All network steerable assemblages can be certified measurement-device-independently.}
\end{center}

In order to prove Theorem~\ref{theo:main_network}, one can choose, in finite dimensional systems, $M_{BB'}^{b=0}$ to be the projection on the maximally entangled state, $\ket{\psi^+}=d^{-\frac{1}{2}}\sum_{i=1}^d\ket{ii}$, to obtain
\begin{align}
    M^{b=0}_{BB'}=\ketbra{\psi^+}_{BB'} \rightarrow \mathcal{M}^{b=0,\vec{a}|\vec{x}}_{B'}=\frac{1}{d}\left(\tilde{\sigma}^{\vec{a}|\vec{x}}_{B'}\right)^\intercal \;,
\end{align}
where $(\cdot)^\intercal$ is the transpose.
The theorem then follows from the fact that the associated assemblage is $\tilde{m}^{b=0,\vec{a}|\vec{x}}_{B'}\equiv\left(\tilde{\sigma}^{\vec{a}|\vec{x}}_{B'}\right)^\intercal$, together with the observation that an assemblage is steerable if and only if its transpose is steerable. In the continuous variable case, a similar argument based on the projection on two-mode squeezed states~\cite{abiuso2021measurement} can be used, which needs additional technical care. For further details on the complete CV proof of Theorem~\ref{theo:main_network}, we refer to the Appendix~\ref{app:steerproof}.




\section{Bipartite case: MDI Steering and semi-quantum games}
\subsection{Steerability of quantum states}



So far, we have been studying steering from assemblages, which are the local quantum statistics obtained by Bob.
Fundamentally, one necessary ingredient for an agent to steer another remote agent is their pre-shared quantum state.
It is consequently also natural to view steering as a property of quantum states (see, e.g., Refs.~\cite{Piani2015PRL,Hsieh2016PRA}) and to ask whether we can characterize steering at this level.
To be more precise, for simplicity, let us now consider a bipartite setting $AB$.
Then, we say a state $\rho_{AB}$ is {\em steerable} if it can induce a steerable assemblage.
More formally, we have the following definition:\\

\begin{definition}[Steerable state]
\label{Def:steerable state}
{\em A state $\rho_{AB}$ is {\em steerable} (from $A$ to $B$) if there exist a set of POVMs ${\bf E}\coloneqq\{E^{a|x}_A\}_{a,x}$ on $A$ (namely, for each $x$, the set $\{E^{a|x}_A\}_a$ is a POVM) such that the assemblage defined by 
\begin{align}\label{Eq: state induced assemblage}
\tilde{\sigma}_B^{a|x}(\rho_{AB},{\bf E})\coloneqq{\Tr_A}\left[(E_A^{a|x}\otimes\mathbb{I}_B)\rho_{AB}\right]
\end{align}
is steerable.
$\rho_{AB}$ is {\em unsteerable} if it is not steerable.}
\end{definition}

Clearly, a state $\rho_{AB}$ can induce infinitely many assemblages on Bob's side.
Operationally, in a steering experiment, all possible assemblages induced by pre-sharing $\rho_{AB}$ can be constructed by using {\em local operations and shared randomness} (LOSR) to map $\rho_{AB}$ into an assemblage~\cite{Schmid2020Quantum,Cavalcanti2013PRA}.
Mathematically, they are collections of mappings $\{\mathcal{E}^{a|x}\}_{a,x}$ of the following form~\cite{Cavalcanti2013PRA}: 
\begin{align}
    \rho_{AB}&\mapsto\mathcal{E}^{a|x}({\rho}_{AB})\nonumber\\
    &\coloneqq \int_\Lambda d\mu(\lambda) \; \mathcal{F}^\lambda_B\left[ \tilde{\sigma}_B^{a|x}(\rho_{AB},{\bf E}^\lambda)\right]\nonumber\\
    &=\int_\Lambda d\mu(\lambda) \; \mathcal{F}^\lambda_B\circ\mathrm{Tr}_A\left[ (E^{a|x,\lambda}_A\otimes\mathbb{I}_B){\rho}_{AB} \right],
\end{align}
where $\mathcal{F}^{\lambda}_B$'s are channels (namely, complete-positive trace-preserving linear maps) acting on $B$, ${\bf E}^\lambda\coloneqq\{E^{a|x,\lambda}_A\}_{a,x}$ is a set of POVMs for every $\lambda$, and $d\mu(\lambda)$ is a probability measure with respect to the shared random variable $\lambda\in\Lambda$. 
Physically, in a bipartite steering experiment, Alice and Bob can construct the assemblage $\mathcal{E}^{a|x}(\rho_{AB})$ by pre-sharing the randomness $\lambda\in\Lambda$ and then using this hidden variable $\lambda$ to run different sub-experiment with local measurement 
${\bf E}^\lambda$
in $A$ and local channel $\mathcal{F}_B^\lambda$ in $B$.

For a given state ${\rho}_{AB}$, we denote all possible assemblages induced by it on Bob's side collectively as the set
\begin{align}\label{Eq:LOSR induced state assemblages}
    \mathcal{S}_B({\rho}_{AB}) \coloneqq \left\{\mathcal{E}^{a|x}({\rho}_{AB})\;\Big|\;\{\mathcal{E}^{a|x}\}_{a,x}:{\rm LOSR} \right\}.
\end{align}
This naturally introduces a partial order over states: for two states $\rho_{AB}$ and $\phi_{AB}$, we write 
\begin{align}
\rho_{AB} \succeq_{\mathrm{st} } \phi_{AB}
\end{align}
if and only if, for every 
assemblage $\tilde{\phi}^{a|x} \in \mathcal{S}_B({\phi}_{AB})$,
\begin{align}
   \inf_{\tilde{\rho}^{a|x} \in \mathcal{S}_B({\rho}_{AB})}
   \sup_{x\in \mathcal{X}}
   \int_{\mathcal{A}}d{a}\;\left\lVert \tilde{\rho}^{a|x} - 
   \tilde{\phi}^{a|x}  
   \right\rVert_1 =0,
\end{align}
where $\norm{\cdot}_1$ is the trace norm.
Physically, this means that $\rho_{AB}$ outperforms (or equals) $\phi_{AB}$ in any steering experiment, as every assemblage induced by $\phi_{AB}$ can also be produced by $\rho_{AB}$ via LOSR.
It is then natural to ask whether unsteerable states are the least powerful.
This is indeed the case: by setting $\mathcal{F}^\lambda_B(\cdot) = \sigma^\lambda_B \Tr(\cdot)$ as a measure-and-prepare channel of the state $\sigma^\lambda_B$, we have, for every $\rho_{AB}$,
\begin{align}
     \mathcal{E}^{a|x}({\rho}_{AB})
     =&  \int_\Lambda d\mu(\lambda) \; \mathrm{Tr}\left[ \left({E}^{a|x,\lambda}_A\otimes \mathbb{I}_B\right) {\rho}_{AB} \right]\sigma^\lambda_B \notag\\
     \eqqcolon&\int_\Lambda d\mu(\lambda) \; P(a|x,\lambda)\sigma^\lambda_B,
\end{align}
where we have defined $P(a|x,\lambda)\coloneqq\mathrm{Tr}\left[ \left({E}^{a|x,\lambda}_A\otimes \mathbb{I}_B\right) {\rho}_{AB} \right]$. This is
in the bipartite local-hidden-state (LHS) model; that is, NLHS with only two parties.
Hence, LOSR can generate every LHS model from any pre-shared state. Together with the fact that $\mathcal{E}^{a|x}(\phi_{AB}^{\rm LHS})$ is unsteerable for every LOSR $\{\mathcal{E}^{a|x}\}_{a,x}$ and unsteerable state $\phi_{AB}^{\rm LHS}$, we conclude that, for every state $\rho_{AB}$,
\begin{align}
\mathcal{S}_B({\phi}_{AB}^{\rm LHS})\subseteq \mathcal{S}_B({\rho}_{AB})\quad\forall\;{\phi}_{AB}^{\rm LHS}:{\rm unsteerable}. 
\end{align}
That is, 
\begin{align}
\text{$\rho_{AB}\succeq_{\mathrm{st}}\phi_{AB}^{\rm LHS}$ for all unsteerable states ${\phi}_{AB}^{\rm LHS}$.} 
\end{align}

\subsection{Comparing steerable states measurement-device-independently}
Now, it is of foundational value to ask whether the MDI approach developed in this work can also describe this partial order.
In other words, we ask:
\begin{center}
{\em Is there an MDI characterization of the partial order ``$\succeq_{\mathrm{st}}$''?}
\end{center}
A successful answer can allow us to go beyond certification by measurement-device-independently {\em comparing} steerability of different states and hence also the assemblages induced by them.
In this section, we answer this question in the positive by generalizing the approaches of Refs.~\cite{cavalcanti2013entanglement, buscemi2012all} from finite dimension to the CV regime.
Namely, not just assemblages, the MDI approach can also fully capture steering even at the level of states.
To this end, we consider a scenario termed {\em steering game} as detailed as follows.
Without loss of generality, here we take Alice's side ($A$) as the untrusted side (as given in Definition~\ref{Def:steerable state}).

\begin{definition}[Steering game]
\em
In a bipartite setting with two agents $A$ (Alice) and $B$ (Bob), a {\em steering game}, denoted by $\mathcal{G}$, is a collection containing the following ingredients:
\begin{enumerate}
    \item Four index sets $\mathcal{A} = \{a\}, \mathcal{B}= \{b\}, \mathcal{X} = \{x\},\mathcal{Y} = \{y\}$. Each of them must be $\mathbb{R}$, $\mathbb{N}$, or a finite Cartesian product between them.
    \item A probability distribution $P_\mathcal{X}(x)$ on $\mathcal{X}$ and a state ensemble $\{{\psi}^y_{B'}, P_\mathcal{Y}(y)\}_y$ on $\mathcal{Y}$ (namely, each ${\psi}^y_{B'}$ is a state, and $P_\mathcal{Y}(y)$ is a probability distribution on $\mathcal{Y}$).
    \item A bounded pay-off function $\mathcal{P}: \mathcal{A}\times \mathcal{B}\times \mathcal{X}\times \mathcal{Y} \rightarrow \mathbb{R}$. 
\end{enumerate}
A given steering game $\mathcal{G}$ aims to test the steerability of a given state $\rho_{AB}$ shared by Alice and Bob, which works as follows.
In each round, the referee picks indices $x \in\mathcal{X}$ and $y\in\mathcal{Y}$ according to the probability distributions $P_\mathcal{X}(x)$ and $P_\mathcal{Y}(y)$.
The referee sends the index $x$ to Alice and the quantum state $\psi^y_{B'}$ to Bob.
Alice uses $x$ as her input as the standard steering experiment.
She then measure her local party by pre-fixed POVMs ${\bf E}=\{E_A^{a|x}\}_{a,x}$, producing the outcome $a$ and hence the state assemblage $\tilde{\sigma}_B^{a|x}(\rho_{AB},{\bf E})$ on Bob's side [as given in Eq.~\eqref{Eq: state induced assemblage}]. 
Then, Bob performs a pre-fixed bipartite POVM ${M}^{b}_{BB'}$ jointly on $\tilde{\sigma}_B^{a|x}(\rho_{AB},{\bf E})\otimes\psi_{B'}^y$ and obtain the outcomes $b$ [resulting in statistics in the form of Eq.~\eqref{eq:mdi_exp}].
\end{definition}

As a result, their performance in the steering game $\mathcal{G}$ is evaluated through the pay-off function $\mathcal{P}(a,b,x,y)$.
The average score of the player with the state $\rho_{AB}$ is determined by 
\begin{align}
    \mathbb{E}_{\mu}[ \mathcal{P}] = \int_{\mathcal{A},\mathcal{B},\mathcal{X},\mathcal{Y}} d\mu(a,b,x,y) \;\mathcal{P}(a,b,x,y)
\end{align}
where $d\mu(a,b,x,y) \coloneqq \mu(a,b,x,y)da\;db\;dx\;dy$ and
\begin{align}
    &\mu(a,b,x,y) \coloneqq\nonumber\\
    &\quad\quad P_\mathcal{X}(x)P_\mathcal{Y}(y)\Tr\left[{M}_{BB'}^b\left(\tilde{\sigma}_B^{a|x}(\rho_{AB},{\bf E})\otimes\phi_{B'}^y\right)\right].
\end{align}
As a side remark, the integral is taken with respect to the Lebesgue measure, and we replace it by the discrete sum when the indices are countable.

For a given $\rho_{AB}$, let $\mathbb{P}(\rho,\mathcal{G})$ be the set of all possible convex mixtures of probability distributions $\mu$'s induced by $\rho_{AB}$ in $\mathcal{G}$, which can be explicitly written as
\begin{align}\label{Eq: P set definition}
    \mathbb{P}(\rho_{AB},\mathcal{G}) \coloneqq \mathrm{conv}\Big\{&\mu(a,b,x,y)={P_\mathcal{X}(x)P_\mathcal{Y}(y)}\notag\\
    &\times \mathrm{Tr}\left[\left(E_{A}^{a|x}\otimes {M}_{BB'}^{b}\right) \left( \rho_{AB} \otimes \psi_{B'}^y \right)\right] \notag\\
     &\Big|\;\{E_{A}^{a|x}\}_{a,x}, \{{M}_{BB'}^{b} \}_b:\mathrm{POVMs}
    \Big\}.
\end{align}
Then, the maximum score of $\rho_{AB}$ in the steering game $\mathcal{G}$ is given by
\begin{align}
    \wp(\rho_{AB},\mathcal{G}) \coloneqq \sup_{\mu \in \mathbb{P}(\rho,\mathcal{G})}\mathbb{E}_{\mu}[ \mathcal{P}]. 
\end{align}
From here, we define a partial order by comparing the probability distributions $\mu$'s that can be induced.
More formally, we write 
\begin{align}
\rho_{AB}\succeq_{\rm game} \phi_{AB}
\end{align}
if and only if
\begin{align}
\overline{\mathbb{P}}({\rho}_{AB}, \mathcal{G}) \supseteq \overline{\mathbb{P}}({\phi}_{AB}, \mathcal{G})\quad\forall\;\mathcal{G},
\end{align}
where $\overline{\mathbb{P}}$ denotes the closure of the set $\mathbb{P}$ in the $L^1$-norm topology.
Physically, this is the set of distributions that can be prepared, including those that can be asymptotically approximated with an arbitrary precision.
Additionally, if $\rho_{AB}$ can induce all distributions in $\overline{\mathbb{P}}({\phi}_{AB}, \mathcal{G})$ {\em without} asymptotic approximation, we further write 
\begin{align}
{\rho_{AB}\succ_{\rm game} \phi_{AB},}
\end{align}
which means
\begin{align}
\mathbb{P}({\rho}_{AB}, \mathcal{G}) \supseteq \overline{\mathbb{P}}({\phi}_{AB}, \mathcal{G})\quad\forall\;\mathcal{G}.
\end{align}
As one can already observe, this partial order is MDI, as it can be tested by only trusting the input states $\psi^y$.
Consequently, if one can use it to characterize the partial order ``$\succeq_{\rm st}$'' from {LOSR} transformations, then an MDI characterization can be established.
Furthermore, we note that the partial order $\rho_{AB}\succeq_{\rm game}\phi_{AB}$ by definition implies that $\rho_{AB}$ outperforms (or at least equals) $\phi_{AB}$ in every steering game.
In fact, we have the following:
\begin{lemma}\label{lemma:HB-thm}
For every states $\rho_{AB}$ and $\phi_{AB}$, we have that 
$\rho_{AB}\succeq_{\rm game}\phi_{AB}$ if and only if 
\begin{align}
\wp (\rho_{AB}, \mathcal{G}) \geq \wp( {\phi}_{AB} , \mathcal{G}) \quad\forall\;\mathcal{G}.
\end{align}
\end{lemma}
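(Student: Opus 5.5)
The forward direction is immediate, so the plan concentrates on the converse, which is a Hahn--Banach separation argument on the sets $\overline{\mathbb{P}}(\cdot,\mathcal{G})$.

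\emph{Forward direction.} Assume $\rho_{AB}\succeq_{\rm game}\phi_{AB}$. For a fixed game $\mathcal{G}$ the payoff $\mathcal{P}$ is bounded, so the map $\mu\mapsto\mathbb{E}_\mu[\mathcal{P}]$ is a continuous linear functional on $L^1$. Hence
\begin{align*}
\sup_{\mu\in\mathbb{P}}\mathbb{E}_\mu[\mathcal{P}]=\sup_{\mu\in\overline{\mathbb{P}}}\mathbb{E}_\mu[\mathcal{P}].
\end{align*}
Since $\overline{\mathbb{P}}(\rho_{AB},\mathcal{G})\supseteq\overline{\mathbb{P}}(\phi_{AB},\mathcal{G})$, this gives $\wp(\rho_{AB},\mathcal{G})\ge\wp(\phi_{AB},\mathcal{G})$.

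\emph{Converse.} Suppose the inclusion fails for some game $\mathcal{G}_0$, i.e.\ there is $\mu_0\in\overline{\mathbb{P}}(\phi_{AB},\mathcal{G}_0)\setminus\overline{\mathbb{P}}(\rho_{AB},\mathcal{G}_0)$. The set $\overline{\mathbb{P}}(\rho_{AB},\mathcal{G}_0)$ is convex (it is the closure of a convex hull) and closed in $L^1$.
\begin{enumerate}
\item Apply the Hahn--Banach separation theorem in $L^1$ to the point $\mu_0$ and this closed convex set. Because the dual of $L^1$ is $L^\infty$, this yields a bounded function $g(a,b,x,y)$ and a $\delta>0$ with
\begin{align*}
\int g\,d\mu_0\;>\;\sup_{\mu\in\overline{\mathbb{P}}(\rho_{AB},\mathcal{G}_0)}\int g\,d\mu+\delta .
\end{align*}
\item Define a new game $\mathcal{G}'$ by keeping the index sets, $P_\mathcal{X}$ and the input ensemble $\{\psi^y_{B'},P_\mathcal{Y}(y)\}$ of $\mathcal{G}_0$ and taking the payoff $\mathcal{P}:=g$. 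Its sets $\mathbb{P}$ coincide with those of $\mathcal{G}_0$, since those sets depend only on the inputs and not on the payoff.
\item Then $\wp(\phi_{AB},\mathcal{G}')\ge\int g\,d\mu_0$, because the supremum over $\mathbb{P}(\phi_{AB},\mathcal{G}')$ equals the supremum over its closure by the continuity above. Combined with step 1 this gives $\wp(\phi_{AB},\mathcal{G}')>\wp(\rho_{AB},\mathcal{G}')$, contradicting the hypothesis.
\end{enumerate}

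\emph{Main obstacle.} The delicate step is the functional-analytic setup when the index sets are continuous ($\mathbb{R}$ or products with $\mathbb{R}$).
\begin{itemize}
\item One must check that each $\mu$ is a genuine $L^1$ density. Each $\mu$ is a density in $(a,b,x,y)$ obtained from POVM densities with respect to Lebesgue measure, or counting measure where indices are discrete, so this holds, and the $\mathbb{P}$'s are subsets of the probability densities in $L^1$.
\item One must make sure the separating functional is an admissible payoff. Since $(L^1)^*=L^\infty$ for $\sigma$-finite measures, the separating $g$ is essentially bounded. Changing it on a null set makes it everywhere bounded and measurable, and therefore a legitimate payoff as in the definition of a steering game.
\end{itemize}
With these two points verified, the rest of the argument is routine.
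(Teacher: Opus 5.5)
Your proposal is correct and follows the paper's proof. The forward direction uses continuity of $\mu\mapsto\mathbb{E}_\mu[\mathcal{P}]$, and the converse uses a Hahn--Banach separation in $L^1$ whose $L^\infty$ separating functional serves as the payoff of a new game sharing all other ingredients of $\mathcal{G}_0$. Your extra remarks (convexity of the closure, and taking an everywhere-bounded representative of the separator) are small refinements the paper leaves implicit.
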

See Appendix~\ref{App:stering game proofs} for the detailed proof.
Hence, the partial order ``$\succeq_{\rm game}$'' is equivalent to comparing the performance in all steering games.

Now, we present our main results in this section, which are necessary/sufficient conditions for {LOSR} transformation (whose proofs are located in Appendix~\ref{App:stering game proofs}):
\begin{theorem}\label{Result:steering game combined}
Consider two states $\rho_{AB}$ and $\phi_{AB}$.
Then, we have
\begin{align}\label{theo:steering_monotone}
{\rho}_{AB} \succeq_{\mathrm{st} }{\phi}_{AB}\quad&\Longrightarrow\quad{\rho}_{AB} \succeq_{\rm game} {\phi}_{AB};\\
\label{theo:assemblage_conversion}
{\rho}_{AB} \succ_{\rm game} {\phi}_{AB}\quad&\Longrightarrow\quad{\rho}_{AB} \succeq_{\mathrm{st} }{\phi}_{AB}.
\end{align}
\end{theorem}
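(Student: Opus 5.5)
We prove the two implications separately. The forward direction is a simulation argument at the level of distributions. The converse is a Buscemi-type construction that embeds the target assemblage into a suitably chosen game and recovers it from the statistics.

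\emph{First implication.} Assume $\rho_{AB}\succeq_{\rm st}\phi_{AB}$ and fix a game $\mathcal{G}$ and an extremal point $\mu\in\mathbb{P}(\phi_{AB},\mathcal{G})$. Such a point is generated by POVMs $\{E^{a|x}_A\}$ and $\{M^b_{BB'}\}$, so
\begin{align}
\mu(a,b,x,y)=P_\mathcal{X}(x)P_\mathcal{Y}(y)\Tr\big[M^b_{BB'}(\tilde\phi^{a|x}_B\otimes\psi^y_{B'})\big],
\end{align}
where $\tilde\phi^{a|x}_B=\tilde\sigma^{a|x}_B(\phi_{AB},{\bf E})\in\mathcal{S}_B(\phi_{AB})$.

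By hypothesis there is a sequence of LOSR maps $\mathcal{E}_k$ with
\begin{align}
\sup_x\int da\,\|\mathcal{E}^{a|x}_k(\rho_{AB})-\tilde\phi^{a|x}\|_1\to0 .
\end{align}
Each $\mathcal{E}_k(\rho_{AB})$ is a mixture over $\lambda$ of assemblages $\mathcal{F}^\lambda_B[\tilde\sigma^{a|x}_B(\rho_{AB},{\bf E}^\lambda)]$. Replacing Bob's measurement by $(\mathcal{F}^{\lambda\dagger}_B\otimes{\rm id}_{B'})(M^b_{BB'})$, which is again a POVM, shows that the $\lambda$-components of the resulting distribution $\mu_k$ lie in $\mathbb{P}(\rho_{AB},\mathcal{G})$. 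Convexity and closure of the set then give $\mu_k\in\overline{\mathbb{P}}(\rho_{AB},\mathcal{G})$ (using closure also to pass from finite mixtures to continuous ones).

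Finally,
\begin{align}
\|\mu_k-\mu\|_{L^1}\le\sup_x\int da\,\|\mathcal{E}_k^{a|x}(\rho)-\tilde\phi^{a|x}\|_1\to0,
\end{align}
because $0\le M^b$ and $\sum_b M^b=\mathbb{I}$ make the map $X\mapsto\{\Tr[M^b(X\otimes\psi^y)]\}_b$ an $L^1$ contraction, and $P_\mathcal{X}$ averages over $x$. Hence $\mu\in\overline{\mathbb{P}}(\rho_{AB},\mathcal{G})$. Extending from extremal points by convex combinations and closure yields $\overline{\mathbb{P}}(\phi,\mathcal{G})\subseteq\overline{\mathbb{P}}(\rho,\mathcal{G})$.

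\emph{Second implication.} Take any $\tilde\phi^{a|x}\in\mathcal{S}_B(\phi_{AB})$. We design a game $\mathcal{G}^\star$ that realises ``steering teleportation'' of this assemblage. In finite dimension, choose $\psi^y$ to be a tomographically complete (informationally complete) set of states, and let Bob's reference measurement be $M^{b=0}=\ketbra{\psi^+}$ with $M^{b=1}=\mathbb{I}-\ketbra{\psi^+}$. Then the distribution $\mu^\star$ induced by $\phi$ encodes $\frac1d(\tilde\phi^{a|x})^\intercal$ through the probabilities $\Tr[(\tilde\phi^{a|x})^\intercal\psi^y]$, exactly as in Theorem~\ref{theo:main_network}. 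In the CV case, use two-mode squeezed projections and coherent-state inputs as in Appendix~\ref{app:steerproof}, with a limit in the squeezing. That limit is precisely why only the closure is available and why the hypothesis is stated with $\succ_{\rm game}$.

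By $\rho\succ_{\rm game}\phi$, the distribution $\mu^\star$ is realised exactly by some mixture of strategies on $\rho$, with POVMs $E^{a|x,\lambda}$, $N^{b,\lambda}_{BB'}$ and weights $\mu(\lambda)$. Conditioning on $b=0$ and using tomographic completeness of $\{\psi^y\}$ gives the operator identity
\begin{align}
\int d\mu(\lambda)\,\Tr_B\big[N^{0,\lambda}_{BB'}(\tilde\sigma^{a|x}_B(\rho,{\bf E}^\lambda)\otimes\mathbb{I}_{B'})\big]=\tfrac1d(\tilde\phi^{a|x})^\intercal .
\end{align}
We then define the map
\begin{align}
\mathcal{F}^\lambda_B(X)\coloneqq\big[\Tr_B(N^{0,\lambda}(X\otimes\mathbb{I}))\big]^\intercal .
\end{align}
It is completely positive, since it is a partial trace against a positive operator followed by the transpose (which, composed with the Choi-type structure here, is CP). Completing it to a trace-preserving channel will use the $b=1$ branch, or a flag-and-discard together with the normalisation check from the footnote following Eq.~\eqref{eq:povm_NLHS}. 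This produces an LOSR map with $\mathcal{E}^{a|x}(\rho)=\tilde\phi^{a|x}$ exactly, hence $\rho\succeq_{\rm st}\phi$.

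\emph{Main obstacle.} The hardest step is turning the conditional $b=0$ branch into a genuine trace-preserving LOSR map. It succeeds with probability $1/d$, and we must show the renormalisation is $\lambda$-independent. This is enforced by the game itself: the payoff fixes $\Pr(b=0\mid a,x,y)$ summed over a basis of inputs to equal $1/d$. The second difficulty is CV tomographic completeness with bounded payoffs, for which we rely on the Appendix~\ref{app:steerproof} machinery.
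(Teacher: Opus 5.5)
Your first implication is correct and is essentially the paper's argument. You are, if anything, more careful in absorbing $\mathcal{F}^\lambda_B$ into Bob's POVM via the adjoint, and in using the closure to pass to continuous mixtures.

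The second implication, however, has a genuine gap at exactly the step you call the main obstacle, and your proposed remedy does not close it. Write $\tilde\sigma^{a|x}_\lambda\coloneqq\tilde\sigma^{a|x}_B(\rho_{AB},{\bf E}^\lambda)$. Your map $X\mapsto d\,[\Tr_B(N^{0,\lambda}(X\otimes\mathbb{I}))]^\intercal$ is indeed completely positive, since its Choi operator is $d\,(N^{0,\lambda})^\intercal\geq0$. But it is trace preserving only if $d\,\Tr_{B'}N^{0,\lambda}=\mathbb{I}_B$, and nothing enforces this. As far as normalisation goes, reproducing $\mu^\star$ only yields $\int d\mu(\lambda)\,\Tr[(d\,\Tr_{B'}N^{0,\lambda}-\mathbb{I}_B)\,\tilde\sigma^{a|x}_\lambda]=0$. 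That is a constraint on the $\lambda$-average and on the particular conditional states, not on each $\lambda$ or on arbitrary inputs. Nothing else supplies the missing constraint:
the $b=1$ data add nothing, since $N^{1,\lambda}=\mathbb{I}-N^{0,\lambda}$;
the footnote's normalisation is again an averaged quantity;
and no pay-off is involved at all, because $\rho_{AB}\succ_{\rm game}\phi_{AB}$ is a set inclusion.

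What you build is therefore, in general, a heralded local filter on Bob's side, and filtering is strictly stronger than LOSR. For example, a filter on $B$ maps $\sqrt{p}\ket{00}+\sqrt{1-p}\ket{11}$ to a maximally entangled state, whose two-basis (BB84) assemblage cannot be reached from the former by LOSR when $p\neq 1/2$. Flag-and-discard does not help either, since LOSR assemblages cannot be post-selected.

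The missing idea is to use a \emph{complete} Bell measurement with outcome-dependent corrections, and to keep every outcome. In finite dimension, take $M^b_{BB'}=(U_b\otimes\mathbb{I})\ketbra{\psi^+}(U_b^\dagger\otimes\mathbb{I})$ with the $d^2$ Weyl unitaries $U_b$. Exact realisation on $\rho_{AB}$ plus tomographic completeness then give $\int d\mu(\lambda)\,\Tr_B[N^{b,\lambda}(\tilde\sigma^{a|x}_\lambda\otimes\mathbb{I})]=\frac{1}{d}(U_b^\dagger\tilde\phi^{a|x}U_b)^\intercal$ for every $b$. Define $\mathcal{F}^\lambda(X)\coloneqq\frac{1}{d}\sum_bU_b[\Tr_B(N^{b,\lambda}(X\otimes\mathbb{I}))]^\intercal U_b^\dagger$. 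It is CPTP for every $\lambda$, simply because $\sum_bN^{b,\lambda}=\mathbb{I}$, and it reproduces $\tilde\phi^{a|x}$ exactly.

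This is the paper's construction in CV. There, ${\rm BSM}^b_{BB'}$, the corrections $\hat{D}(\alpha_b)$ and the integral over $b$ yield a genuine channel $\mathcal{L}^{\beta,\lambda}_{B''}$ on the $\rho_{AB}$ side, trace preserving because $\int db\,R^{b|\lambda}_{BB'}=\mathbb{I}$. On the $\phi_{AB}$ side they yield the strategy-independent random displacement channel $\mathcal{I}^\beta$. Your CV sketch also lacks two further ingredients:
\begin{itemize}
\item writing each coherent input as a projection of the two-mode squeezed state $\Phi^\beta_{B'B''}$, so that uniqueness of the Husimi $Q$-function turns equal statistics into an operator identity on $B''$;
\item the limit $\beta\to0^+$, using strong convergence of $\mathcal{I}^\beta$ to the identity plus dominated convergence over the mixture defining elements of $\mathcal{S}_B(\phi_{AB})$; this is where the closure in $\succeq_{\rm st}$ comes from.
\end{itemize}
The gentle-measurement argument of Appendix~\ref{app:steerproof} cannot replace this. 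It approximates a known assemblage by its filtered version, but it does not produce a channel out of the unknown strategy used on $\rho_{AB}$.
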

Hence, the partial order ``$\succeq_{\rm game}$'' and its strong version ``$\succ_{\rm game}$'' completely characterize the partial order ``$\succeq_{\rm st}$,'' which is from the {LOSR} transformations.
Physically, this means that comparing the assemblages that can be prepared via {LOSR} (namely, ``$\succeq_{\rm st}$'') can be fully determined in an MDI way (via ``$\succeq_{\rm game}$'' and ``$\succ_{\rm game}$''), which thus answers the question posted at the beginning of this section.

Despite this result has successfully uncovered an MDI characterization of the partial order ``$\succeq_{\rm st}$,'' there is still an open technical problem.
Namely, it is still unknown whether one can prove that the partial order ``$\succeq_{\rm st}$'' is equivalent to the partial order ``$\succeq_{\rm game}$.''
We leave this open problem for future projects due to its mathematical nature, which is beyond the scope of this work, as we focus on the physical implications here.




\section{Practical MDI protocols for continuous-variable network steering}
\label{sec:MDI_GAUSS}

We start this section considering standard bipartite steering, i.e., $n=2$, with one untrusted party $A$ (Alice), and one semi-trusted $B$ (Bob). 
In the CV bosonic regime, we focus on the case of CV Gaussian quantum states and their steering properties.
Two main criteria are known to hold for the bipartite steering of Gaussian states (under Gaussian measurements), namely, the {\em Reid's criterion}~\cite{reid1989demonstration} and the {\em covariance matrix criterion}~\cite{wiseman2007steering} (which are equivalent~\cite{wiseman2007steering}; see also Appendix~\ref{app:CV_steer} for details). 

\begin{tcolorbox}[title=Notation and choice of units for CV states.]
In this work, we use standard units for bosonic operators and quadratures, in which
\begin{align}\label{Eq: x p def}
    \hat{x}=\frac{\hat{a}+\hat{a}^\dagger}{\sqrt{2}}\;,\quad 
    \hat{p}= \frac{\hat{a}-\hat{a}^\dagger}{\sqrt{2}i}\;,
\end{align}
where $a$, $a^\dagger$ are the canonical annihilation and creation operators.
It follows that $[\hat{x},\hat{p}]=i$ (where we have set $\hbar=1$), and the uncertainty principle reads ${\rm Var}[\hat{p}]{\rm Var}[\hat{x}]\geq \frac{1}{4}$. 
In these units, the quadratures' variance on coherent states, including the vacuum, is equal to $\frac{1}{2}$.\\
Coherent states are defined as
\begin{align}\label{Eq:coherent state definition}
    \ket{\alpha}=e^{i \sqrt{2}\alpha_p \hat{x}-i\sqrt{2}\alpha_x\hat{p}}\ket{0}=e^{\alpha \hat{a}^{\dagger}-\alpha^*\hat{a}}\ket{0}\;,
\end{align}
where $\alpha=\alpha_x+ i\alpha_p$. A coherent state satisfies $\langle \hat{x}_\alpha\rangle=\sqrt{2}\alpha_x$ and $\langle \hat{p}_\alpha\rangle=\sqrt{2}\alpha_p$.
\end{tcolorbox}

Reid's criterion considers a two-input scenario for the untrusted party, Alice, trying to estimate Bob's quadratures, conditioned on her outputs. For clarity, we label Alice's measurement choices as $x$ and $p$ (for measuring two quadratures $\hat{x}$ and $\hat{p}$), for which she obtains as a result of the random variable $v_x$ or $v_p$. 
After knowing the value of her result, Alice can try to estimate the corresponding value of Bob's quadratures $\hat{x}_B$, or $\hat{p}_B$, guessing $x_{\rm est}(v_x)$ and $p_{\rm est}(v_p)$, respectively. We can therefore build Alice's error operators
\begin{align}
    \hat{x}_{\rm err}:=\hat{x}_B|_{v_x}-x_{\rm est}(v_x)\;, \\
    \hat{p}_{\rm err}:=\hat{p}_B|_{v_p}-p_{\rm est}(v_p)\;,
\end{align}
Reid's criterion says that unsteerable states satisfy the Heisenberg uncertainty principle for $\hat{x}_{\rm err},\hat{p}_{\rm err}$. That is, the inequality 
\begin{align}
    \langle \hat{x}_{\rm err}^2\rangle \langle \hat{p}_{\rm err}^2\rangle \geq \frac{1}{4}
    \label{eq:uncert_prod_main}
\end{align}
holds for unsteerable states regardless of Alice's choice of observables.
This immediately implies
\begin{align}
    \langle \hat{x}_{\rm err}^2\rangle + \langle \hat{p}_{\rm err}^2\rangle \geq 1\;.
    \label{eq:uncert_sum_main}
\end{align}
Notice that $\langle\hat{x}_{\rm err}^2\rangle$ is minimized when $x_{\rm est}(v_x)=\langle\hat{x}_B\rangle|_{v_x}$, and similarly for $\hat{p}_{\rm err}$. That is, the optimal estimation by Alice is that of guessing the average value of Bob's quadrature conditioned on Alice's outcome. In such a case, one has
$\langle\hat{x}_{\rm err}^2\rangle \equiv {\rm Var}[\hat{x}_B]|_{v_x}$, 
$\langle\hat{p}_{\rm err}^2\rangle \equiv {\rm Var}[\hat{p}_B]|_{v_p}$.
\\

\begin{figure}
    \centering
    \includegraphics[width=0.5\textwidth]{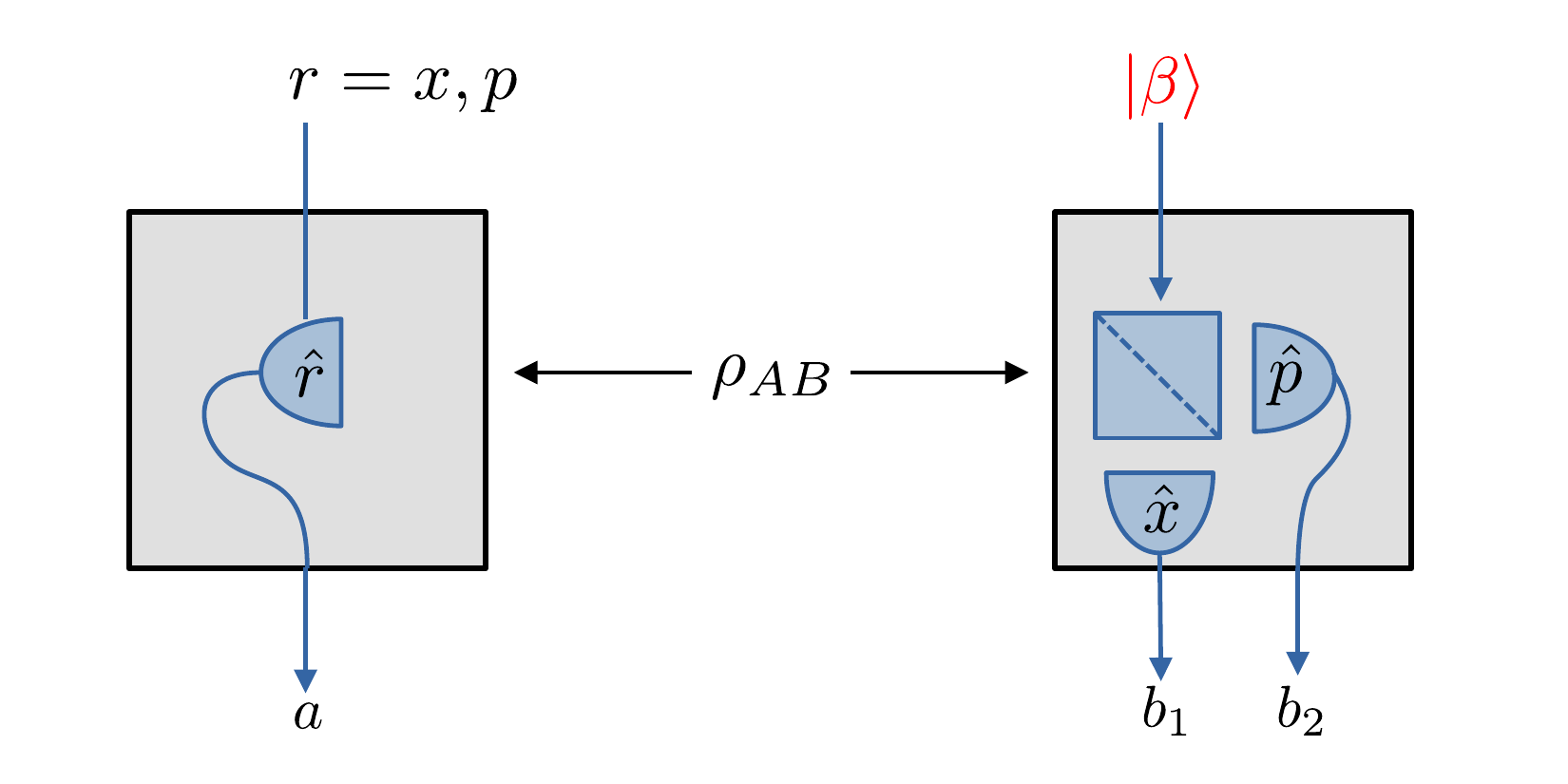}
    \caption{Setup for a practical CV MDI steering witnessing experiment based on Gaussian quantum optics. The protocol is ``almost device-independent'': only the input coherent state $\ket{\beta}$ is trusted (which is highlighted in red). The corresponding witness, as given in Eq.~\eqref{eq:EB_bound_main}, can be violated by any Gaussian steerable state via simple homodyne measurements.}
    \label{fig:setup}
\end{figure}

\subsection{MDI witnesses for bipartite CV steering}

Starting from Reid's criterion, in the following, we show that any (bipartite) Gaussian steerable state can be detected in an MDI-steering scenario. 
That is, a scenario with one party (Alice) acting on a completely untrusted black-box, and Bob only trusts coherent states inputs to the experiment. The protocol we propose for detecting steering only uses Gaussian states and measurements, while unsteerable states cannot be used to violate the witness regardless.
Consider the scenario represented in Fig.~\ref{fig:setup}. Alice and Bob share a bipartite CV state. Alice has a dichotomic choice of measurements (labelled by $r=\{x,p\}$ for convenience) and outputs $a$. Bob inputs in their measurement device a random coherent state $\ket{\beta}$, which is the only trusted part of the experiment, and outputs two values $b_1,b_2$.
In this scenario, we propose the following MDI-steering witness for CV state: if $\rho_{AB}$ is unsteerable, one has the quantity
\begin{align}
    \mathcal{W}\coloneqq\left\langle \left( b_1 - \frac{a}{\sqrt{2}}-\beta_x\right)^2\right\rangle\bigg|_{r=x} + 
    \left\langle\left( b_2 + \frac{a}{\sqrt{2}}-\beta_p\right)^2\right\rangle \bigg|_{r=p} 
    \label{eq:witness}
\end{align}
to be lower bounded by $\mathcal{W}\gtrsim 1$, where the exact lowerbound can be made arbitrarily close to $1$, depending on the input distribution of $\beta$. 
More precisely,
\begin{theorem}
\label{res1}
    In the bipartite MDI scenario of Fig.~\ref{fig:setup}, assuming $\rho_{AB}$ is unsteerable and $\beta$ is sampled randomly from a distribution with width $\Delta_\beta$, the minimum value of $\mathcal{W}$ as defined in Eq.~\eqref{eq:witness} is bounded by
\begin{align}
\label{eq:EB_bound_main}
    \mathcal{W} \geq \frac{\Delta_\beta^2}{1+\Delta_\beta^2}\;.
\end{align}
The sample distribution of $\beta$ is assumed to be Gaussian for simplicity, i.e., $P(\beta)=(\pi\Delta^2_\beta)^{-1}{\rm Exp}[-|\beta|^2/\Delta^2_\beta]$, and equivalent bounds can be proven for other forms of $P(\beta)$~\cite{abiuso2021measurement,abiuso2023verification}.
\end{theorem}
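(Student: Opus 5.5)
We reduce the MDI witness to Reid's criterion, applied to the effective assemblage that Bob's black-box measurement induces on the trusted input mode $B'$, as in Proposition~\ref{prop:meas_unsteer}.

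\textbf{Step 1 (effective assemblage).} For an unsteerable $\rho_{AB}$, Alice's assemblage admits an LHS decomposition $\tilde\sigma_B^{a|r}=\int d\lambda\,\mu(\lambda)p(a|r,\lambda)\sigma_B^{(\lambda)}$. Bob's unknown device is a POVM $M^{b_1,b_2}_{BB'}$. By Eq.~\eqref{eq:povm_NLHS}, the statistics then read
\begin{align}
p(a,b_1,b_2|r,\beta)=\int d\lambda\,\mu(\lambda)\,p(a|r,\lambda)\,\Tr\!\big[\mathcal{M}^{b_1,b_2|\lambda}_{B'}\ketbra{\beta}\big].
\end{align}
For each $\lambda$, $\{\mathcal{M}^{b_1,b_2|\lambda}_{B'}\}$ is a valid POVM on $B'$, since $\sigma_B^{(\lambda)}$ is normalized. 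Crucially, this POVM does not depend on $r$ or $a$. Hence, conditioned on $\lambda$, Alice's output $a$ and Bob's pair $(b_1,b_2)$ are independent. Moreover, $(b_1,b_2)$ is the outcome of a single fixed measurement on the coherent state $\ket\beta$, used both for the $r=x$ and for the $r=p$ round.

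\textbf{Step 2 (decouple Alice).} For fixed $\lambda$, Alice's best strategy is to output a value $a$ that depends only on $\lambda$ (and $r$). Minimizing each squared error over the shifts $a/\sqrt2$ and $-a/\sqrt2$ replaces them by arbitrary $\lambda$-dependent constants $c_x(\lambda)$ and $c_p(\lambda)$. These constants can then be absorbed into Bob's estimators $\tilde b_1=b_1-c_x(\lambda)$ and $\tilde b_2=b_2-c_p(\lambda)$. This yields
\begin{align}
\mathcal W\ \ge\ \inf_{\lambda}\ \inf_{\{N^{b_1,b_2}\}}\int d^2\beta\,P(\beta)\,\Big\langle(\tilde b_1-\beta_x)^2+(\tilde b_2-\beta_p)^2\Big\rangle_{\beta}.
\end{align}
Here the POVM $N$ acts on $\ket\beta$ alone. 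The problem is therefore reduced to a single-mode joint estimation of both quadratures of a coherent state with a Gaussian prior. This is precisely the entanglement-breaking/measure-and-prepare bound of the MDI entanglement literature~\cite{abiuso2021measurement}.

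\textbf{Step 3 (estimation bound).} We must show that any joint estimation of $(\beta_x,\beta_p)$ from $\ket\beta$, with prior $P(\beta)$, has total mean-square error at least $\Delta_\beta^2/(1+\Delta_\beta^2)$. We would use the Bayesian quantum Cramér–Rao / Holevo-type bound for conjugate parameters. First, write the average state $\bar\rho=\int P(\beta)\ketbra\beta$, which is thermal with mean photon number $\Delta_\beta^2$. Then use the operators $\int P(\beta)\,\beta_{x}\ketbra\beta$ and $\int P(\beta)\,\beta_{p}\ketbra\beta$. The error of any joint measurement of $\beta_x,\beta_p$ obeys the Personick-type bound for each component. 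A Gaussian calculation gives, per quadrature, $\Delta_\beta^2/2\cdot 1/(1+\Delta_\beta^2)$ after accounting for the extra vacuum noise forced by joint measurement of non-commuting quadratures (Arthurs–Kelly). Summing the two components gives Eq.~\eqref{eq:EB_bound_main}. Tightness follows from heterodyne detection with the optimal linear rescaling: the estimator $\frac{\Delta^2}{\Delta^2+1}\cdot$(heterodyne outcome) saturates the bound. This also confirms that the bound tends to $1$ as $\Delta_\beta\to\infty$, matching Eq.~\eqref{eq:uncert_sum_main}.

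The main obstacle is Step 3. Single-quadrature Bayesian bounds alone give only half the needed error, so one must genuinely exploit the joint-measurement constraint on $\hat x,\hat p$. We would first attempt the bound via the Holevo/right-logarithmic-derivative matrix inequality for the Gaussian shift model. If that is cumbersome, we would instead use the covariant-measurement reduction: the prior and loss are phase-space covariant, so the optimal POVM can be taken covariant, $N(\gamma)=\pi^{-1}D(\gamma)\xi D(\gamma)^\dagger$. The optimization then reduces to choosing a seed state $\xi$, whose quadrature variances satisfy the uncertainty relation. Minimizing over $\xi$ yields the vacuum seed and the stated value.
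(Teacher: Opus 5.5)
Your proposal follows the paper's proof: the LHS decomposition turns Bob's black box into a $\lambda$-dependent but $r$-independent POVM on the coherent input, and convexity lets you fix $\lambda$. Alice's output then decouples; you treat its mean more carefully than the paper's ``$\mathrm{Var}[a]$'' step by absorbing it into Bob's estimators. What remains is the joint Bayesian estimation of $(\beta_x,\beta_p)$, which is bounded by the multiparameter Bayesian Cram\'er--Rao (Yuen--Lax, RLD-type) bound that the paper simply cites, and saturated by heterodyne with shrinkage factor $\Delta_\beta^2/(1+\Delta_\beta^2)$.

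Drop the covariant-measurement fallback, though. The Gaussian prior is rotation-invariant but not displacement-invariant, so restricting to POVMs $\pi^{-1}D(\gamma)\xi D(\gamma)^\dagger$ is justified only for a flat prior (the $\Delta_\beta\to\infty$ limit). It therefore cannot give the finite-$\Delta_\beta$ bound, which genuinely needs the Bayesian RLD route.
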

We provide a full proof of Theorem~\ref{res1} in Appendix~\ref{app:MDI_gauss_wit}. The gist of the proof is as follows: the combinations $b_1-\frac{a}{\sqrt{2}}$ and $b_2+\frac{a}{\sqrt{2}}$ can be regarded as effective estimators of $\beta_r$ for $r=x$ and $r=p$, respectively. The unsteerability of $\rho_{AB}$, via its corresponding LHS model, makes the effective estimation strategy equivalent to a remote joint estimation of both $\beta_x$ and $\beta_p$ made by Alice.
One can then bound the minimum joint error in the estimation of unknown coherent states quadratures using multiparameter Bayesian Cramer-Rao bounds~\cite{yuen1973multiple}, which take into account the single-shot incompatibility of $\hat{x}$ and $\hat{p}$ measurements (cf. Refs.~\cite{genoni2013optimal,morelli2021bayesian,abiuso2023verification}).

Having verified that inequality in Eq.~\eqref{eq:EB_bound_main} cannot be violated without the use of bipartite CV steering, we also show a dual statement, namely
\begin{proposition}
\label{prop:gauss_viola}
Any Gaussian steerable $\rho_{AB}$ is sufficient to violate
the inequality in Eq.~\eqref{eq:EB_bound_main}
via Gaussian measurements. The protocol, as pictured in Fig.~\ref{fig:setup}, only uses beamsplitters and homodyne measurements.
\end{proposition}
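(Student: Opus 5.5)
Our plan is to construct an explicit Gaussian protocol realizing the setup of Fig.~\ref{fig:setup} and to show that, for any Gaussian steerable $\rho_{AB}$, the resulting value of $\mathcal{W}$ is governed by the Reid/covariance-matrix quantities, so that steerability forces $\mathcal{W}<1$ (and hence below the bound of Theorem~\ref{res1} for large enough $\Delta_\beta$).

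\textbf{Reduction to a standard form.} Since Gaussian steering (from $A$ to $B$, under Gaussian measurements) is equivalent to violation of Reid's criterion~\eqref{eq:uncert_prod_main}, we first use local Gaussian unitaries to bring $\rho_{AB}$ into a convenient form. On Bob's side we allow a local symplectic transformation and displacement (these are part of the untrusted device, hence free). On Alice's side we allow arbitrary Gaussian measurements, in particular homodyne of suitably rotated and rescaled quadratures; rescaling is harmless because Alice's output $a$ can be multiplied by any constant. The aim is to reach a situation where, with Alice measuring quadrature $\hat q_x$ for $r=x$ and $\hat q_p$ for $r=p$, the linear estimators satisfy $\langle(\hat x_B - \hat q_x/\sqrt2 \cdot c)^2\rangle + \langle(\hat p_B + \hat q_p/\sqrt2\cdot c)^2\rangle<1$. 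Using Reid's criterion in its product form and a local squeezing on $B$, we can balance the two error variances so that their product being below $1/4$ implies that their sum is below $1$. For Gaussian states the optimal estimators are linear, so this is without loss of generality.

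\textbf{Bob's measurement.} Bob performs a CV Bell measurement: he combines his mode $B$ with the input $\ket{\beta}$ on a balanced beamsplitter and homodynes the output ports, $x$ on one and $p$ on the other. This yields $b_1 \simeq (\hat x_B + \hat x_\beta)/\sqrt2$ and $b_2 \simeq (\hat p_\beta - \hat p_B)/\sqrt2$, up to signs and the $\sqrt2$ rescaling that the outputs can absorb. With the paper's convention $\langle\hat x_\beta\rangle=\sqrt2\beta_x$, suitably rescaled outputs give $b_1 - a/\sqrt2 - \beta_x$ equal to a vacuum-noise contribution from $\beta$ plus the Reid error operator $\hat x_{\rm err}$; the same holds for the $p$ term with the opposite sign. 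We then compute $\mathcal{W}$ explicitly as a sum of two contributions:
\begin{itemize}
\item the coherent-state shot noise, which contributes a fixed amount independent of $\Delta_\beta$;
\item the Reid errors $\langle\hat x_{\rm err}^2\rangle+\langle\hat p_{\rm err}^2\rangle$.
\end{itemize}

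\textbf{Main obstacle.} The delicate step is getting the constants right, so that the shot-noise term combined with the Reid errors falls strictly below $\Delta_\beta^2/(1+\Delta_\beta^2)$ for some finite $\Delta_\beta$, rather than merely below $1$. Note that the bound tends to $1$ as $\Delta_\beta\to\infty$, so it suffices to show $\mathcal{W}<1$ independently of $\beta$ and then pick $\Delta_\beta$ large. To achieve this we will tune the beamsplitter and output scalings, exploiting the freedom to rescale Bob's outputs as well as Alice's, and verify that the witness combination cancels the $\beta$-dependence exactly. If the naive Bell measurement leaves a residual shot-noise term of order $1/2$ per quadrature, we will instead rely on the available amplification freedom (for instance, a local squeezing of Bob's mode before the beamsplitter) to reweight the two contributions.
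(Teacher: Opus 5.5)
Your architecture matches the paper's: Reid's product criterion characterizes Gaussian steering, and a local squeezing balances the two inferred variances so the product bound becomes a sum bound. Bob then uses a balanced beamsplitter with two homodynes, and one shows $\mathcal{W}<1$ and lets $\Delta_\beta$ grow. The gap is that the step you defer as the ``main obstacle'' is the only nontrivial content of the proposition, and you leave it undone while expecting the wrong outcome. Take $\hat b_1=(\hat x_B+\hat x_\beta)/\sqrt2$, $\hat b_2=(-\hat p_B+\hat p_\beta)/\sqrt2$, with Alice outputting $a=\langle\hat x_B\rangle|_{v_x}$ (resp.\ $\langle\hat p_B\rangle|_{v_p}$). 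Then exactly
\[
b_1-\tfrac{a}{\sqrt2}-\beta_x=\tfrac{1}{\sqrt2}\,\hat x_{\rm err}+\tfrac{1}{\sqrt2}\bigl(\hat x_\beta-\sqrt2\beta_x\bigr).
\]
The cross term vanishes because the coherent input is independent of the source and has mean $\sqrt2\beta_x$, and ${\rm Var}[\hat x_\beta]=1/2$. With the analogous $p$ term this gives $\mathcal{W}=\tfrac12+\tfrac12\bigl(\langle\hat x_{\rm err}^2\rangle+\langle\hat p_{\rm err}^2\rangle\bigr)$, independent of $\beta$. So the shot noise is $1/4$ per quadrature and the Reid errors enter with weight $1/2$. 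This identity is what makes ``Reid sum $<1$'' equivalent to $\mathcal{W}<1$, and your proof needs it; with it, your squeezing step and large $\Delta_\beta$ finish the argument as in the paper.

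Two auxiliary claims in your plan are also wrong. First, the $\sqrt2$ factors cannot be absorbed by rescaling Bob's outputs. Write $b_1=k_x\hat x_B+c_1\hat x_\beta$ and $b_2=-k_p\hat p_B+c_2\hat p_\beta$, for any transmissivity and output scaling. Cancelling the $\beta$-dependence, which you require, forces $c_1=c_2=1/\sqrt2$, and $[b_1,b_2]=0$ then forces $k_xk_p=c_1c_2=1/2$. So the two Reid errors can never both enter with unit weight. With Alice rescaling her output accordingly, $\mathcal{W}=\tfrac12+k_x^2\langle\hat x_{\rm err}^2\rangle+k_p^2\langle\hat p_{\rm err}^2\rangle$, whose minimum over $k_xk_p=1/2$ is $\tfrac12+\sqrt{\langle\hat x_{\rm err}^2\rangle\langle\hat p_{\rm err}^2\rangle}$. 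This is exactly the $\kappa$-squeezing/transmissivity freedom the paper invokes, and it reproduces Reid's product threshold. Second, your fallback would not work. Squeezing Bob's mode only changes $k_x,k_p$; the shot-noise term is fixed at $c_1^2/2+c_2^2/2=1/2$ by the coherent input alone. Had it been $1/2$ per quadrature, $\mathcal{W}\ge 1$ would hold for every state, and no local operation on $B$ could produce a violation.
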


In the protocol, Bob's station mixes the input coherent state with his share of the source in a 50:50 beam splitter. the outputs of Bob are therefore represented by the observables
\begin{align}
    \hat{b}_1 =\frac{\hat{x}_B+\hat{x}_\beta}{\sqrt{2}}\;,\quad
    \hat{b}_2 =\frac{-\hat{p}_B+\hat{p}_\beta}{\sqrt{2}}\;.
\end{align}
At the same time, Alice measures the two possible quadratures $\hat{r}_A=\hat{x}_A,\hat{p}_A$, resulting in  $\mu_r$, and outputs
\begin{align}
    a= & r_{\rm est}(\mu_r)\;,
\end{align}
that is, $ a= \langle\hat{x}_B\rangle|_{v_x} $ if $r=x$, while $a= \langle\hat{p}_B\rangle|_{v_p} $ if $r=p$.
When Alice and Bob follow this protocol, the $x$-term in the witness defined in Eq.~\eqref{eq:witness}, namely the average of  \mbox{$( b_1 - \frac{a}{\sqrt{2}}-\beta_x)^2$}, becomes
\begin{align}
   \left\langle \left( \frac{\hat{x}_B+\hat{x}_\beta}{\sqrt{2}}- \frac{\langle\hat{x}_B\rangle|_{v_x}}{\sqrt{2}}-\beta_x\right)^2\right\rangle 
   =\frac{\langle\hat{x}^2_{\rm err}\rangle}{2} + \frac{{\rm Var} [\hat{x}_\beta]}{2}\;,
\end{align}
and similarly the $p$-term is equal to $\frac{\langle\hat{p}^2_{\rm err}\rangle}{2} + \frac{{\rm Var} [\hat{p}_\beta]}{2}$. In our units, the variance  of the quadratures on coherent states is equal to ${\rm Var}[\hat{x}_\beta]={\rm Var}[\hat{p}_\beta]=\frac{1}{2}$. From this, {with such a protocol,} it follows that the entire witness becomes
\begin{align}
    \mathcal{W}_{\rm protocol}=\frac{1}{2}+{\frac{\langle \hat{x}_{\rm err}^2\rangle + \langle \hat{p}_{\rm err}^2\rangle}{2}.}
\end{align}
It then further follows that, if $\rho_{AB}$ is steerable and such that $\langle \hat{x}_{\rm err}^2\rangle + \langle \hat{p}_{\rm err}^2\rangle<1$, we have
\begin{align}
    \mathcal{W}_{\rm protocol}< 1\;,
\end{align}
which violates the inequality in Eq.~\eqref{eq:EB_bound_main} when $\Delta_\beta$ is sufficiently large. For a pure two-mode squeezed state, which can be expressed in the Fock basis as
\begin{align}
\label{eq:TMSS}
   \ket{t^{(\gamma)}}\coloneqq\frac{1}{\cosh(\gamma)}\sum_{j=0}^{\infty} \tanh{(\gamma)}^j
\ket{jj}\;,
\end{align} we have $\mathcal{W}_{\rm protocol}=\frac{1}{2}+\frac{1}{2\cosh{\gamma}}$  ({cf.~Appendix}~\ref{sec:TMSS}), and as such any amount of squeezing is sufficient to violate the MDI steering witness.
Finally, in order to prove Proposition~\ref{prop:gauss_viola}, we notice that any pair of quadratures violating the uncertainty relation in Eq.~\eqref{eq:uncert_prod_main} can be mapped  via  a local squeezing of the form $\hat{x}_{\rm err}\rightarrow \hat{x}'_{\rm err}=\kappa\hat{x}_{\rm err}$, $\hat{p}_{\rm err}\rightarrow \hat{p}'_{\rm err}=\hat{p}_{\rm err}/\kappa$ violating Eq.~\eqref{eq:uncert_sum_main}, for some real value of $\kappa$. Moreover, the (Gaussian and symplectic) $\kappa$-squeezing can be performed either on the $B$-mode directly, or by rescaling of the beamsplitter transimissivity and input coherent distribution (details in Appendix~\ref{app:protocol_viola}).

\subsection{A simple line-network example}
The MDI steering protocol presented above serves as a basis for the example we present in the following: we consider a simple $n$-partite network formed by $n-1$ untrusted parties $A_i:i=1,\dots n-1$ and a party $B$ in a line, where $A_1$ and $B$ corresponding to the extremal points of the line, each edge of which is a bipartite source shared by $A_i$ and $A_{i+1}$ (including $A_{n-1}$ and $B$). As in the bipartite case, $A_{1}$ has an input $r=\{x,p\}$ to the experiment and a single output $a^{(1)}$, whereas each $A_{i>1}$ has no inputs and two outputs $a^{(i)}_{1,2}$. $B$ injects trusted coherent states $\ket{\beta}$ and has two outputs $b_{1,2}$. In such a scenario, we define the MDI witness 
\begin{widetext}
\begin{align}
    \mathcal{W}_{\rm line}:=\left\langle \left( b_1 +a^{(n-1)}_1+\dots+a^{(2)}_1- \frac{a^{(1)}}{\sqrt{2}}-\beta_x\right)^2\right\rangle\bigg|_{r=x} + 
    \left\langle\left( b_2 +a^{(n-1)}_2+\dots+a^{(2)}_2+ \frac{a^{(1)}}{\sqrt{2}}-\beta_p\right)^2\right\rangle \bigg|_{r=p} \;,
    \label{eq:witness_line}
\end{align}
\newpage
\end{widetext}
which corresponds to a line-network generalization of $\mathcal{W}$ defined in Eq.~\eqref{eq:witness} above.
Then, we have the following result:
\begin{proposition}
    Under the same assumptions of Theorem~\ref{res1}, if the underlying assemblage $\tilde{\sigma}_B^{\vec{a}|\vec{x}}$ is line-network unsteerable, the MDI witness $\mathcal{W}_{\rm line}$ satisfies the same bound
    \begin{align}
    \mathcal{W}_{\rm line} \geq \frac{\Delta_\beta^2}{1+\Delta_\beta^2}\;.
    \label{eq:line_ineq}
\end{align}
Moreover, a line network consisting of squeezed two-mode states and balanced heterodyne detectors can be used to violate the inequality in Eq.~\eqref{eq:line_ineq}.
\end{proposition}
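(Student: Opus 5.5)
The plan is to reduce the line-network statement to the bipartite Theorem~\ref{res1} by grouping parties, and then handle the violation part with an explicit Gaussian computation built on Proposition~\ref{prop:gauss_viola}.

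\emph{Step 1: from line-unsteerability to bipartite unsteerability.} Consider the coarse-grained bipartition in which $A_1$ alone is the untrusted party and all intermediate parties $A_2,\dots,A_{n-1}$ are absorbed into Bob's (untrusted) measurement station. Since the $A_{i>1}$ have no inputs, the quantity $b'_{1,2}\coloneqq b_{1,2}+\sum_{i=2}^{n-1}a^{(i)}_{1,2}$ is simply a post-processed output of an effective measurement on $B B'$ together with the systems of the intermediate nodes. Then $\mathcal{W}_{\rm line}$ coincides with $\mathcal{W}$ of Eq.~\eqref{eq:witness} with $b_{1,2}\to b'_{1,2}$ and $a\to a^{(1)}$.

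To apply Theorem~\ref{res1}, I need the effective assemblage on the trusted input side, conditioned only on $a^{(1)}$ and $r$, to be bipartite unsteerable. I would use the NLHS decomposition of Eq.~\eqref{eq:NLHS}:
\begin{align}
\tilde\sigma^{\vec a|r}_B=\sum_{\vec\lambda}\sigma^{(\vec\lambda)}_B\, p_\loc(\vec a|r,\vec\lambda)\mu(\vec\lambda).
\end{align}
Conditioning on $a^{(1)}$ and averaging over the inputless outputs $a^{(i>1)}$ inside Bob's extended lab, the joint statistics take the form $\sum_{\vec\lambda,\vec\eta}\mu\nu\, p(a^{(1)}|r,\cdot)\,[\text{a }\lambda\text{-dependent state/process on }B]$. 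This is an LHS model for the pair ($A_1$ versus everything else), with hidden variable $(\vec\lambda,\vec\eta)$.

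More precisely, I would mirror Proposition~\ref{prop:meas_unsteer}. The joint effective POVM $\mathcal M^{b',a^{(1)}|r}_{B'}$ decomposes as $\sum_\xi q(\xi)\,p(a^{(1)}|r,\xi)\,\mathcal M^{b'|\xi}_{B'}$. This is exactly the structure used in the proof of Theorem~\ref{res1}: a remote estimation of $(\beta_x,\beta_p)$ by a single measurement not depending on $r$, combined with a classical guess from $A_1$. The Bayesian Cram\'er--Rao bound then gives $\mathcal{W}_{\rm line}\ge\Delta_\beta^2/(1+\Delta_\beta^2)$ unchanged.

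\emph{Step 2: violation.} Take two-mode squeezed states $\ket{t^{(\gamma)}}$ on every edge. $A_1$ homodynes $\hat x$ or $\hat p$ as in the bipartite protocol. Each intermediate $A_i$ performs a balanced heterodyne, i.e.\ a 50:50 beamsplitter plus conjugate homodynes, on its two modes, outputting suitably signed and rescaled $x$- and $p$-type combinations. Bob does the same with $\ket\beta$ and his mode. This realizes an entanglement-swapping chain: the sum $b'_1$ estimates $\hat x_\beta$ minus an $\hat x$-quadrature correlated with $A_1$'s mode, with added noise per link decaying like $e^{-2\gamma}$. The same holds for $b'_2$ with $\hat p$ and the opposite sign.

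I would compute the variances link by link using EPR correlations, $\mathrm{Var}(\hat x_i-\hat x_j)=e^{-2\gamma}$ and $\mathrm{Var}(\hat p_i+\hat p_j)=e^{-2\gamma}$. This should give $\mathcal W_{\rm line}\to 1/2$ as $\gamma\to\infty$ for fixed $n$, hence below the bound for large $\Delta_\beta$.

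\emph{Expected obstacle.} The main difficulty is Step 1: making rigorous that absorbing the inputless intermediate parties preserves the Bayesian estimation argument. Their hidden variables $\vec\eta$ are correlated with $A_1$'s, so I must check that the joint estimator still factorizes as an $r$-independent measurement on $\beta$ plus a classical function of $a^{(1)}$. I would handle this by conditioning on all hidden variables and using convexity of the bound. In Step 2 the only care needed is fixing signs and $\sqrt2$ factors in the witness.
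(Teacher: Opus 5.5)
Your proposal is correct. Step~2 is essentially the paper's construction: a chain of two-mode squeezed states, homodyne detection at $A_1$, and beamsplitter-plus-homodyne relays at $A_2,\dots,A_{n-1}$ and at Bob. Step~1, however, groups the parties the opposite way from the paper.

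The paper puts \emph{all} $A_i$ into one untrusted super-party. Its $r$-dependent output $a^{(1)}\mp\sqrt2\sum_{i\ge2}a^{(i)}_{1,2}$ (for $r=x,p$) turns $\mathcal W_{\rm line}$ literally into $\mathcal W$. The paper then notes that an NLHS decomposition is in particular an LHS decomposition across $\vec A|B$, with hidden variable $\vec\lambda$ and response function $p_\loc(\vec a|r,\vec\lambda)$. Since the proof of Theorem~\ref{res1} uses only the LHS form of Bob's assemblage, it applies verbatim.

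You instead absorb $A_2,\dots,A_{n-1}$ into Bob's black-box station. That enlarged station contains untrusted boxes with no quantum description, so Theorem~\ref{res1} cannot literally be invoked, and your first formulation (an unsteerable ``effective assemblage on the trusted input side'') does not quite make sense. What you then actually do, correctly, is re-run its proof at the level of effective POVMs:
\begin{itemize}
\item Conditioning on $\xi=(\vec\lambda,\vec\eta)$ gives $\mathcal M^{b',a^{(1)}|r}_{B'}=\sum_\xi q(\xi)\,p(a^{(1)}|r,\xi)\,\mathcal M^{b'|\xi}_{B'}$, with $\mathcal M^{b'|\xi}_{B'}$ an $r$-independent POVM on the input. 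This needs that the relays have no inputs and that $a^{(1)}$ factorizes from them given $\vec\eta,\vec\lambda$.
\item You replace $a^{(1)}$ by its conditional mean and apply the Bayesian Cram\'er--Rao bound to the joint estimator $(b'_1,b'_2)$ of $(\beta_x,\beta_p)$.
\item Averaging over $\xi$, with $\beta$ drawn independently of $\xi$, gives the bound.
\end{itemize}

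The paper's route is shorter and assumes less. It never uses the $\vec\eta$'s or the factorization of $p_\loc$, so it yields the bound for any assemblage that is merely unsteerable across $\vec A|B$. Your route uses the full network factorization. Its advantage is that it mirrors the physics of the violating protocol, where the relays act as entanglement-swapping stations that effectively extend Bob's Bell measurement.

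Your caution about signs in Step~2 is warranted. Take the phase of Eq.~\eqref{eq:TMSS} on every link ($\hat x$'s correlated, $\hat p$'s anticorrelated), with Bob measuring $(\hat x_B+\hat x_\beta)/\sqrt2$ and $(\hat p_\beta-\hat p_B)/\sqrt2$. Then each relay should output $(\hat x_{A'_i}-\hat x_{A''_i})/\sqrt2$ and $-(\hat p_{A'_i}+\hat p_{A''_i})/\sqrt2$, and $A_1$ should output $\hat x_{A_1}$ or $-\hat p_{A_1}$. Every link then enters through its squeezed combination and contributes $\tfrac12 e^{-2\gamma_i}$ to each of the two terms, so $\mathcal W_{\rm line}=\tfrac12+\sum_i e^{-2\gamma_i}\to\tfrac12$, as you anticipated. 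If the relays instead use the same combinations as Bob, every link not touching $A_1$ enters through its anti-squeezed quadratures, unless the phase of those sources is flipped.
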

The proof of Eq.~\eqref{eq:line_ineq} immediately follows by considering the combinations $a^{(n-1)}_{1,2}+\dots+a^{(2)}_{1,2}\pm \frac{a^{(1)}}{\sqrt{2}}=:a$ as a single output of the super-party containing all $A_i$'s, and the observation that any assemblage $\sigma^{\vec{a}|\vec{x}}_B$ presenting a NLHS model, trivially has a bipartite LHS model; as such, $\sigma_B^{\vec{a}|\vec{x}}$ is unsteerable in the bipartition $\vec{A}|B$. 

The same inequality in Eq.~\eqref{eq:line_ineq} can instead be violated using Gaussian states and measurements. To show this, we can consider an initially shared quantum state consisting in a chain of bipartite two-mode squeezed states [as given in Eq.~\eqref{eq:TMSS}] shared by all adjacent parties 
\begin{align}
\label{eq:line_source}
    \rho_{A_1,\dots,A_{n-1},B}=t^{(\gamma_1)}_{A_1A'_2}\otimes t^{(\gamma_2)}_{A''_2A'_3}\otimes \dots\otimes t^{(\gamma_{n-1})}_{A''_{n-1}B}\;.
\end{align}
By then choosing the homodyne measurements preceded by a 50:50 beam splitter at each station (except $A_1$ who only has a single incoming mode) the parties can measure
\begin{align}\nonumber
    \hat{A}_1(r)&=\hat{r}_{A_1} \quad (r=x,p)\;,\\ \nonumber
    \hat{A}_{i>1}&=\left\{ \frac{\hat{x}_{A'_i}+\hat{x}_{A''_i}}{\sqrt{2}}, \frac{-\hat{p}_{A'_i}+\hat{p}_{A''_i}}{\sqrt{2}} \right\}\;,\\
    \hat{B}&=\left\{ \frac{\hat{x}_{B}+\hat{x}_{\beta}}{\sqrt{2}}, \frac{-\hat{p}_{B}+\hat{p}_{\beta}}{\sqrt{2}}\right\}\;.
    \label{eq:line_measurements}
\end{align}
We evaluate the witness in Eq.~\eqref{eq:witness_line} with this choice of measurements. By splitting 
${\mathcal{W}}_{\rm line}=\langle\hat{\mathcal{W}}_{{\rm line},x}\rangle+\langle\hat{\mathcal{W}}_{{\rm line},p}\rangle$ for rounds in which $r=x$ and $r=p$ respectively, we find, by inserting the measurement choice in Eq.~\eqref{eq:line_measurements},
\begin{align}
    \hat{\mathcal{W}}_{{\rm line},x}= \left(\frac{1}{\sqrt{2}}\bigg( \hat{x}_{B}+\hat{x}_{\beta} + \big(\sum_{i=2}^{n-1}\hat{x}_{A'_i}+\hat{x}_{A''_i}\big)-\hat{x}_{A_1} \bigg)-\beta_x\right)^2\\
    \hat{\mathcal{W}}_{{\rm line},p}= \left(\frac{1}{\sqrt{2}}\bigg( -\hat{p}_{B}+\hat{p}_{\beta} + \big(\sum_{i=2}^{n-1} -\hat{p}_{A'_i}+\hat{p}_{A''_i}\big)-\hat{p}_{A_1} \bigg)-\beta_p\right)^2
\end{align}
By then using the tensor product of two-mode squeeze states as in Eq.~\eqref{eq:line_source}, we find
\begin{align}
    \langle\hat{\mathcal{W}}_{{\rm line},x}\rangle|_{\rho}=\langle\hat{\mathcal{W}}_{{\rm line},p}\rangle|_{\rho}=\frac{1}{4}+\frac{1}{4}\sum_{i=1}^{n-1}\frac{1}{\cosh(\gamma_i)}\;,
\end{align}
which yields a witness violation
$\mathcal{W}_{\rm line} < 1$ whenever the states are sufficiently squeezed to satisfy $\sum_{i=1}^{n-1}\frac{1}{\cosh(\gamma_i)}<1$.\\

\section{Conclusion and Discussion}
In this work, we presented a comprehensive study of quantum steering and its measurement-device-independent (MDI) detection, in general networks and for both finite and infinite dimensional systems.
We considered a generalized notion of steering in network scenarios, in which all-except-one parties are untrusted and treated as black-boxes, while a single party $B$ performs device-dependent analysis. The MDI lifting then consists in partially eliminating the trust in $B$'s devices, in particular the choice of measurement operators, while only allowing the calibrated preparation of fiduciary input states. This thus represents {\em the minimal} semi-device-independent scenario which is not entirely device-independent.

As the first main result, we showed that {\em all} network steering can be witnessed in its corresponding MDI scenario. This result holds regardless of the networks' topology, and 
(while being slightly more technical compared to the finite-dimensional case) also holds for continuous variable quantum systems.

In the bipartite case, we extended our analysis beyond detection and showed that 
the processes of operationally preparing assemblages with a given bipartite state (which is by definition device-dependent) can  be characterized in 
an MDI framework. More precisely,
for two states, their abilities to induce assemblages can be compared by an MDI method.
This result shows that our MDI approach can not only detect steering from assemblages, but also distinguish which state is more resourceful in a steering experiment.
%

Finally, we showed that all Gaussian steerable states can be detected in an experimentally-friendly MDI-steering experiment
with a single trusted input coherent state and Gaussian measurements (notice that even non-Gaussian steerable states violating the Reid's criterion [i.e., Eqs.~\eqref{eq:uncert_prod_main} and~\eqref{eq:uncert_sum_main}] can be detected by the same inequality given in Eq.~\eqref{eq:witness}). We also provided an example of the steering detection between the two ends of a quantum network in a line. We conclude that, while being inapplicable to fully DI protocols, Gaussian quantum optics becomes a viable option for quantumness certification in networks as soon as a single (Gaussian) input is trusted.

We believe this work puts quantum steering and its minimal-trust detection in a complete framework, paving the way to applications based on technologically-ready quantum optics,
and contributes to the undergoing understanding of the links between steering, MDI certification, and conditional metrology~\cite{yadin2021metrological,Larsen2026continuous}.

\acknowledgements
P.A. acknowledges fundings from the Austrian Science Fund (FWF) projects I-6004 and ESP2889224.
C.-Y.~H.~acknowledges support from the Leverhulme Trust Early Career Fellowship (``Quantum complementarity: a novel resource for quantum science and technologies'' with Grant No.~ECF-2024-310).
H.-Y.K. is supported by National Science and
Technology Council (NSTC) with Grant No. NSTC 112-
2112-M-003-020-MY3.


\hypertarget{equal}{\textsuperscript{$\star$}}~These two authors contributed equally.




\medskip
\bibliography{MDIsteer.bib}

\newpage
\onecolumngrid
\appendix

\section{Proof of Theorem~\ref{theo:main_network}}
\label{app:steerproof}

We start with a formal definition of network steering.

\begin{definition}
{\em{\bf(Network Steering)}
Consider, in a given causal network, the set of assemblages that can be decomposed as from Eq.~\eqref{eq:NLHS} (unsteerable assemblages), which is denoted as
\begin{align}
    \mathcal{U} \coloneqq \left\{\tilde\sigma_B^{\vec{a}|\vec{x}}\,\Bigg|\,\tilde\sigma_B^{\vec{a}|\vec{x}}= \sum_{\vec{\lambda}} \sigma^{(\vec{\lambda})}_B p_{\loc}(\vec{a}|\vec{x},\vec{\lambda})\mu(\vec{\lambda})\;\text{for some $\sigma^{(\vec{\lambda})}_B$, local probability $p_{\loc}(\vec{a}|\vec{x},\vec{\lambda})$, and probability measure $\mu(\vec{\lambda})$}\right\}.
\end{align}
Then, an assemblage $\tilde\sigma_B^{\vec{a}|\vec{x}}$ is said to be {\em network steerable} (or simply {\em steerable} in this work) if it does not pertain to the closure (in the trace norm topology) of $\mathcal{U}$; namely, $\tilde\sigma_B^{\vec{a}|\vec{x}}\notin\overline{\mathcal{U}}$}.
\end{definition}

We now restate the theorem for completeness:\\
{\bf Theorem~\ref{theo:main_network}.}
\emph{For a given network steerable $
\tilde{\sigma}_B^{\vec{a}|\vec{x}} $, there exists a measurement with a POVM element, 
without loss of generality,
$M^{b=0}_{BB'}$, such that the corresponding $\tilde{m}_{B'}^{b=0,\vec{a}|\vec{x}}$ as defined in Proposition~\ref{prop:meas_unsteer} is steerable in the same network.
}\\

In order to prove Theorem~\ref{theo:main_network} for general CV states, we will make use of the following lemma.
\begin{lemma}[Gentle measurement lemma \cite{khatri2024principlesquantumcommunicationtheory}]\label{gentle measurement lemma}
Let $\mathds{1} \succeq M \succeq 0$ be an element of a discrete POVM. 
Then, it follows that
\begin{align}
    \frac{1}{2}\left\lVert \rho -  \frac{\sqrt{M}\rho \sqrt{M} }{\Tr[\rho M]} \right\rVert_{1} \leq \sqrt{1- \Tr[\rho M]} \;,
\end{align}
where $\norm{\cdot}_1$ is the trace norm.
\end{lemma}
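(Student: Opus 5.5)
The statement immediately above is the Gentle measurement lemma. I would prove it by the standard route: first bound the fidelity between $\rho$ and the post-measurement state $\rho_M\coloneqq\sqrt{M}\rho\sqrt{M}/\Tr[\rho M]$, then convert that fidelity bound into a trace-norm bound with the Fuchs--van de Graaf inequality. Since $0\preceq M\preceq\mathds{1}$ is bounded, the argument works verbatim on an infinite-dimensional separable Hilbert space, so it also covers the CV case needed later in this appendix.

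\textbf{Step 1 (pure states).} Take $\rho=\ketbra{\psi}$ and write $p\coloneqq\Tr[\rho M]>0$. The post-measurement state is the pure state $\ket{\phi}=\sqrt{M}\ket{\psi}/\sqrt{p}$, and its overlap with $\ket{\psi}$ is
\begin{align}
|\langle\psi|\phi\rangle|^2=\frac{\langle\psi|\sqrt{M}|\psi\rangle^2}{p}.
\end{align}
Because $0\preceq M\preceq\mathds{1}$, we have $\sqrt{M}\succeq M$ (as $\sqrt{t}\ge t$ on $[0,1]$). Hence $\langle\psi|\sqrt{M}|\psi\rangle\ge p$, and the overlap is at least $p^2/p=p$. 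For pure states, $\tfrac12\|\psi-\phi\|_1=\sqrt{1-|\langle\psi|\phi\rangle|^2}\le\sqrt{1-p}$, which is the claim.

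\textbf{Step 2 (mixed states).} For general $\rho$, I would purify it to $\ket{\psi}_{RS}$ and apply Step 1 to the operator $\mathds{1}_R\otimes M$, which is still bounded between $0$ and $\mathds{1}$. The post-measurement state of the purification, $(\mathds{1}\otimes\sqrt{M})\ket{\psi}/\sqrt{p}$, has marginal on $S$ equal to $\rho_M$. The trace distance is monotone under the partial trace $\Tr_R$, so
\begin{align}
\tfrac12\|\rho-\rho_M\|_1\le\tfrac12\big\|\psi_{RS}-\phi_{RS}\big\|_1\le\sqrt{1-p}.
\end{align}

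\textbf{Main obstacle.} The only delicate point is the operator inequality $\sqrt{M}\succeq M$ together with the infinite-dimensional setting. Both are handled by the spectral theorem for the bounded positive operator $M$: it makes $\sqrt{M}-M$ positive, and purifications of trace-class states exist in $\ell^2\otimes\mathcal{H}$. Since $p\neq 0$ is needed for $\rho_M$ to be defined, the case $p=0$ is excluded by assumption, and the bound is trivially consistent there anyway.
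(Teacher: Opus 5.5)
Your proof is correct and follows essentially the same route as the paper: purify $\rho$, use $\sqrt{M}\succeq M$ to show that the overlap between the purification and its post-measurement state is at least $\Tr[\rho M]$, and convert that overlap into a trace-distance bound. The only difference is the order of steps: the paper applies the Fuchs--van de Graaf inequality to the mixed states and then uses monotonicity of fidelity under the partial trace, whereas you use the exact pure-state identity $\tfrac12\|\psi-\phi\|_1=\sqrt{1-|\langle\psi|\phi\rangle|^2}$ and then contractivity of the trace distance under $\Tr_R$.
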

\begin{proof} The proof can be found in Ref.~\cite{khatri2024principlesquantumcommunicationtheory}, and we repeat it here for completeness.
First, we have, from Fuchs-van de Graaf inequalities~\cite{Fuchs1999fidelity},
\begin{align}
     \frac{1}{2}\left\lVert \rho -  \frac{\sqrt{M}\rho \sqrt{M} }{\Tr[{\rho M}]} \right\rVert_1
     \leq \sqrt{1-F\left(  \rho, \frac{\sqrt{M}\rho \sqrt{M} }{\Tr[\rho M]}\right)  },
\end{align}
where $F$ is the fidelity defined as $F(\rho,\sigma)\coloneqq\left[\Tr\left(\sqrt{\sqrt{\rho}\sigma\sqrt{\rho}}\right)\right]^2$.
Then, by setting $\proj{\psi_{AB}}$ as any purification of $\rho$, we have
\begin{align}
    F\left(  \rho, \frac{\sqrt{M}\rho \sqrt{M} }{\Tr[\rho M]}\right) 
    &\geq F\left( \proj{\psi_{AB}},   \frac{(I\otimes \sqrt{M})\proj{\psi_{AB}} (I\otimes \sqrt{M}) }{\langle \psi_{AB}|I\otimes M | \psi_{AB} \rangle}\right) \notag\\
    &=  \frac{ |\langle \psi_{AB}|I\otimes \sqrt{M} | \psi_{AB}\rangle |^2 }{\langle \psi_{AB}| I\otimes M | \psi_{AB}\rangle } \notag\\
    &\geq \langle \psi_{AB}| I\otimes M |\psi_{AB}\rangle  \notag\\
    &= \Tr[\rho M] \;.
\end{align}
Here, we used the fact that $\sqrt{M}\succeq M$.
This thus concludes the proof.
\end{proof}\\

Notice that this lemma holds for both finite and infinite-dimensional Hilbert spaces. 
This thus gives us the following corollary for continuous-variable bosonic states:
\begin{corollary}\label{gentle measurement lemma corollary}
Let $\rho$ be a quantum state with finite energy (namely, $\Tr[\rho N] < \infty$, where $N=a^\dagger a$ is the bosonic number operator). 
Then, for every $\varepsilon > 0$, there exists a parameter $\beta \in (0,\infty)$ such that
\begin{align}
  {\frac{1}{2}\left\lVert \rho -  \frac{e^{-\beta N}\rho e^{-\beta N} }{\Tr[\rho e^{-2\beta N}]} \right\rVert_1  < \varepsilon.}
\end{align}
\end{corollary}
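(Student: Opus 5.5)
The plan is to apply Lemma~\ref{gentle measurement lemma} with the POVM element $M = e^{-2\beta N}$. For every $\beta\in(0,\infty)$ we have $0\preceq e^{-2\beta N}\preceq\mathds{1}$, so $\{e^{-2\beta N},\,\mathds{1}-e^{-2\beta N}\}$ is a legitimate two-outcome (discrete) POVM. Moreover $\sqrt{M}=e^{-\beta N}$, by functional calculus, since $N$ is diagonal in the Fock basis. The lemma then gives directly
\begin{align}
\frac{1}{2}\left\lVert \rho - \frac{e^{-\beta N}\rho e^{-\beta N}}{\Tr[\rho e^{-2\beta N}]}\right\rVert_1 \leq \sqrt{1-\Tr[\rho e^{-2\beta N}]}\;.
\end{align}
Since the lemma holds in infinite dimensions, no extra care is needed at this stage. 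The remaining task is to show that $\Tr[\rho e^{-2\beta N}]\to 1$ as $\beta\to 0^+$, and then to choose $\beta$ small enough that the right-hand side falls below $\varepsilon$.

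To establish this limit, I use the elementary inequality $1-e^{-t}\leq t$ for $t\geq 0$. Applied spectrally to $N$, it gives the operator inequality $\mathds{1}-e^{-2\beta N}\preceq 2\beta N$ on the domain of $N$. Writing $\rho=\sum_{n,m}\rho_{nm}\ket{n}\bra{m}$ in the Fock basis, only the diagonal entries enter, and
\begin{align}
1-\Tr[\rho e^{-2\beta N}] = \sum_n \rho_{nn}\left(1-e^{-2\beta n}\right)\leq 2\beta\sum_n n\,\rho_{nn} = 2\beta\,\Tr[\rho N]\;.
\end{align}
The finite-energy assumption $\Tr[\rho N]=E<\infty$ is exactly what makes this bound useful. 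Choosing any $\beta<\varepsilon^2/(2E)$, or any $\beta>0$ at all when $E=0$, gives a distance strictly below $\varepsilon$. The normalization $\Tr[\rho e^{-2\beta N}]$ is then also strictly positive, so the expression in the corollary is well defined.

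The only delicate point is justifying the trace manipulations in infinite dimensions. This is handled by working with the diagonal Fock expansion, where every series has nonnegative terms, so monotone convergence applies. Finite energy is in fact dispensable for the bare limit: dominated convergence already gives $\Tr[\rho e^{-2\beta N}]\to 1$ for every state. However, finite energy yields the explicit rate $\beta=O(\varepsilon^2/E)$, which is the form I expect to be useful later. There, the corollary will be applied to the two-mode squeezed projector, where $e^{-\beta N}$ appears through $\ket{t^{(\gamma)}}$ with $\tanh\gamma=e^{-\beta}$, to approximate the assemblage uniformly in $\vec{a},\vec{x}$.
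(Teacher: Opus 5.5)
Your proof is correct and follows the paper's argument essentially step for step: the gentle measurement lemma applied with $M=e^{-2\beta N}$ (so $\sqrt{M}=e^{-\beta N}$), the bound $1-\Tr[\rho e^{-2\beta N}]\le 2\beta\Tr[\rho N]$, and the choice $\beta<\varepsilon^2/(2E)$. Your extra remarks are accurate refinements that the paper does not spell out: the $E=0$ case, the strict positivity of the normalization, and the fact that dominated convergence alone gives $\Tr[\rho e^{-2\beta N}]\to 1$ for every state.
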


\begin{proof}
By the gentle measurement lemma (Lemma~\ref{gentle measurement lemma}), we have, 
for every $\beta \in [ 0,\infty)$,
\begin{align}
    \frac{1}{2}\left\lVert \rho -  \frac{e^{-\beta N}\rho e^{-\beta N} }{\Tr[\rho e^{-2\beta N}]}\right\rVert_1 
    \leq \sqrt{ 1- \Tr[ \rho e^{-2\beta N}]}.
\end{align}
Then, for the state with a finite energy, i.e., $\Tr[\rho N] \leq E$ for some $E <\infty$, we have
\begin{align}
   1 - \Tr[ e^{-2\beta N}\rho ] \leq   \Tr[2\beta N\rho] \leq 2\beta E.
\end{align}
Hence, for any given $\varepsilon > 0$, we prove the desired statement by setting $\beta < \frac{\varepsilon^2}{2E}$.
Notice that here $\beta > 0$ as $\frac{E}{\varepsilon^2}$ is bounded above.
\end{proof}\\

By using the gentle measurement lemma (Lemma~\ref{gentle measurement lemma}) and its corollary, we can now prove Theorem~\ref{theo:main_network} as follows.\\

\noindent{\bf Proof of Theorem~\ref{theo:main_network}.}
Assuming $\tilde{\sigma}_B^{\vec{a}|\vec{x}} \notin{\overline{\mathcal{U}}}$, where we take $\overline{\mathcal{U}}$ as $\mathcal{U}$'s closure in the trace norm topology. 
This implies the existence of an $\varepsilon$-ball that is completely outside $\mathcal{U}$.
That is, there exists some $\varepsilon_*>0$ such that
\begin{align}
\label{eqapp:out_clos}
    \left\|\tilde{\rho}_B^{\vec{a}|\vec{x}}-\tilde{\sigma}_B^{\vec{a}|\vec{x}}\right\|_1\leq \varepsilon_* \implies \tilde{\rho}_B^{\vec{a}|\vec{x}}\notin \overline{\mathcal{U}}\;.
\end{align}
Consider now a POVM containing the element
\begin{align}
    M_{BB'}^{b=0}|_\gamma=\ketbra{t^{(\gamma)}_{BB'}},
\end{align}
where recall from the main text that $\ket{t^{(\gamma)}_{BB'}}$ is a two-mode squeezed state, expressed in the Fock basis as
\begin{align}
   \ket{t^{(\gamma)}_{BB'}} \coloneqq \frac{1}{\cosh(\gamma)}\sum_{j=0}^{\infty} \tanh{(\gamma)}^j
\ket{jj}_{BB'}\;.
\end{align}
One can then verify that
\begin{align}
\mathcal{M}_{B'}^{b=0,\vec{a}|\vec{x}}{\Big|_\gamma}={\Tr_B\left[\ketbra{t^{(\gamma)}_{BB'}}\left(\tilde{\sigma}^{\vec{a}|\vec{x}}_B\otimes\mathbb{I}_{B'}\right)\right]}={\frac{1}{\cosh{(\gamma)}^2}\tanh{(\gamma)}^{N}}{\left(\tilde{\sigma}_{B'}^{\vec{a}|\vec{x}}\right)^\intercal\tanh{(\gamma)}^{N},}
\end{align}
where note that $N$ is the number operator, and hence $\tanh{(\gamma)}^{N}$ is an operator.
Hence, we obtain (recall the definition of $\tilde{m}_{B'}^{b,\vec{a}|\vec{x}}$ from Proposition~\ref{prop:meas_unsteer})
\begin{align}
    \tilde{m}_{B'}^{b=0,\vec{a}|\vec{x}}{\Big|_\gamma}= \frac{\mathcal{M}_{B'}^{b=0,\vec{a}|\vec{x}}{\Big|_\gamma}}{\Tr[{M}_{BB'}^{b=0}{\Big|_\gamma}{(\sigma_B\otimes\mathbb{I}_{B'})}]} = \frac{\tanh{{(\gamma)}}^{N}\tilde{\sigma}_{B'}^{\vec{a}|\vec{x}\,\intercal}\tanh{{(\gamma)}}^{N}}{\Tr [\tanh{{(\gamma)}}^{N}{\left({\sigma}_{B'}\right)}^\intercal \tanh{{(\gamma)}}^{N}]}.
\end{align}
By identifying {$e^{-\beta}\equiv\tanh(\gamma)$,} we can use Lemma~\ref{gentle measurement lemma} and Corollary~\ref{gentle measurement lemma corollary} to conclude that, for the given $\varepsilon_*>0$, there exists some $\gamma_*$ such that
\begin{align}
    {\left\|\tilde{m}_B^{b=0,\vec{a}|\vec{x}}{\Big|_{\gamma_*}}-\tilde{\sigma}_B^{\vec{a}|\vec{x}\,\intercal}\right\|_1}\leq {\varepsilon_*}\;,
\end{align}
which, combined with Eq.~\eqref{eqapp:out_clos} and the observation that steerability is transpose-invariant, proves Theorem~\ref{theo:main_network}.
$\blacksquare$

\section{Proofs Related to Steering Games}\label{App:stering game proofs}

We start with the proof of Lemma~\ref{lemma:HB-thm}, which is necessary for the proof of Theorem~\ref{Result:steering game combined}.\\

\begin{proof}[Proof of Lemma~\ref{lemma:HB-thm}]
Note that $\rho_{AB}\succeq_{\rm game}\phi_{AB}$ implies 
$
\wp (\rho_{AB}, \mathcal{G}) \geq \wp( {\phi}_{AB} , \mathcal{G}) $ $\forall\;\mathcal{G}$ by definition. 
{More precisely,} this is because $\mathbb{E}_\mu[\mathcal{P}]$ is continuous in $\mu$ in $L^1$-norm topology, meaning that its maximizations over $\overline{\mathbb{P}}$ and $\mathbb{P}$ are the same:
\begin{align}\label{Eq: Lemma 1 proof 0001}
{\wp(\rho_{AB},\mathcal{G})=\sup_{\mu \in {{\mathbb{P}}({\rho}_{AB},\mathcal{G})}}\mathbb{E}_{\mu}[\mathcal{P}]=\sup_{\mu \in {\overline{\mathbb{P}}({\rho}_{AB},\mathcal{G})}}\mathbb{E}_{\mu}[\mathcal{P}]}.
\end{align}
Hence, it suffices to show the opposite direction.

Suppose {$
\wp (\rho_{AB}, \mathcal{G}) \geq \wp( {\phi}_{AB} , \mathcal{G}) $ $\forall\;\mathcal{G}$ but} there exists a steering game $\mathcal{G}_0$ such that $\overline{\mathbb{P}}({\rho}_{AB},\mathcal{G}_0)  \not\supseteq \overline{\mathbb{P}}({\phi}_{AB},\mathcal{G}_0)$.
Namely, there is a probability distribution $\mu_* \in \overline{\mathbb{P}}({\phi}_{AB},\mathcal{G}_0)$ such that $\mu_* \notin\overline{\mathbb{P}}({\rho}_{AB},\mathcal{G}_0) $.
Since a steering game can only have its index sets as some Cartesian product of $\mathbb{R}$ and $\mathbb{N}$, which is a space of the form $\mathbb{R}^N\times \mathbb{N}^M$, the set $\mathbb{P}(\cdot)$ {[as defined in Eq.~\eqref{Eq: P set definition}]} must be the closed convex subset of the Banach space $L^1(\mathbb{R}^N\times \mathbb{N}^M)$. 
By applying Hahn-Banach separation theorem~\cite{Zalinescu2002convex}, there exists a bounded linear functional ${\mathcal{P}_*}\in L^\infty(\mathbb{R}^N\times \mathbb{N}^M)$ such that
\begin{align}\label{Eq: Lemma 1 proof 0002}
    \sup_{\mu \in {\overline{\mathbb{P}}({\rho}_{AB},\mathcal{G}_0)}}\mathbb{E}_{\mu}[{\mathcal{P}_*}] <  \mathbb{E}_{\mu_*}[{\mathcal{P}_*}] \leq \sup_{\mu'\in {\overline{\mathbb{P}}({\phi}_{AB},\mathcal{G}_0)}} \mathbb{E}_{\mu'}[{\mathcal{P}_*}].
\end{align}
{Now, let us explicitly write the steering game $\mathcal{G}_0$ as the tuple 
\begin{align}
\mathcal{G}_0=(\mathcal{A},\mathcal{B},\mathcal{X},\mathcal{Y},\{P_\mathcal{X}(x)\}_x,\{\psi_{B'}^y,P_\mathcal{Y}(y)\}_y,\mathcal{P}_0)\eqqcolon(g_0,\mathcal{P}_0), 
\end{align}
where $\mathcal{P}_0$ is its associated pay-off function, and we collectively write $g_0\coloneqq(\mathcal{A},\mathcal{B},\mathcal{X},\mathcal{Y},\{P_\mathcal{X}(x)\}_x,\{\psi_{B'}^y,P_\mathcal{Y}(y)\}_y)$.
Then, one can directly observe from the definition of $\mathbb{P}(\cdot)$ [as given in Eq.~\eqref{Eq: P set definition}] that, for any other steering game of the form $\mathcal{G}=(g_0,\mathcal{P})$, whose only difference with $\mathcal{G}_0$ is the pay-off function, we have
\begin{align}\label{Eq: Lemma 1 proof 0003}
{\overline{\mathbb{P}}({\rho}_{AB},\mathcal{G}_0)}={\overline{\mathbb{P}}({\rho}_{AB},\mathcal{G})}\quad\&\quad{\overline{\mathbb{P}}({\phi}_{AB},\mathcal{G}_0)}={\overline{\mathbb{P}}({\phi}_{AB},\mathcal{G})}.
\end{align}
Now, consider the steering game $\mathcal{G}_*=(g_0,\mathcal{P}_*)$.
Then, using Eqs.~(\ref{Eq: Lemma 1 proof 0001}-\ref{Eq: Lemma 1 proof 0003}), we conclude that
\begin{align}
 \wp(\rho_{AB},\mathcal{G}_*)&=\sup_{\mu \in {\overline{\mathbb{P}}({\rho}_{AB},\mathcal{G}_*)}}\mathbb{E}_{\mu}[{\mathcal{P}_*}]=
    \sup_{\mu \in {\overline{\mathbb{P}}({\rho}_{AB},\mathcal{G}_0)}}\mathbb{E}_{\mu}[{\mathcal{P}_*}]\\
    &<  \mathbb{E}_{\mu_*}[{\mathcal{P}_*}] \leq \sup_{\mu'\in {\overline{\mathbb{P}}({\phi}_{AB},\mathcal{G}_0)}} \mathbb{E}_{\mu'}[{\mathcal{P}_*}]
    =\sup_{\mu'\in {\overline{\mathbb{P}}({\phi}_{AB},\mathcal{G}_*)}} \mathbb{E}_{\mu'}[{\mathcal{P}_*}]=\wp(\phi_{AB},\mathcal{G}_*).
\end{align}
}
This leads to a contradiction and then proves the desired result.
\end{proof}
\bigskip

Now, we detail the proof of Theorem~\ref{Result:steering game combined} in the main text, which is split into two parts for Eqs.~\eqref{theo:steering_monotone} and~\eqref{theo:assemblage_conversion}.\\

\begin{proof}[Proof of {Eq.~\eqref{theo:steering_monotone}}]
Consider an arbitrarily given steering game $\mathcal{G}$ and any given $\mu \in \mathbb{P}({\phi}_{AB}, \mathcal{G})$.
Then, there are POVMs ${E}_A^{a|x}$ and ${M}_{BB'}^{b}$ such that
\begin{align}
    \mu(a,b,x,y) 
    &= {P_\mathcal{X}(x)P_\mathcal{Y}(y)}\mathrm{Tr}\left[\left({E}_A^{a|x}\otimes{M}_{BB'}^{b}\right)\left({\phi}_{AB}\otimes {\psi_{B'}^y}\right)\right] \\
    &={P_\mathcal{X}(x)P_\mathcal{Y}(y)}\mathrm{Tr}\left[{M}_{BB'}^{b}\left({\tilde{\phi}}_{B}^{a|x}\otimes {\psi_{B'}^y}\right)\right],
\end{align}
where $\tilde{\phi}_{B}^{a|x}\coloneqq{\rm Tr}_A\left[\left(E_A^{a|x}\otimes\mathbb{I}_B\right)\phi_{AB}\right]
\in \mathcal{S}_B({\phi}_{AB})
$ is an assemblage induced by ${\phi}_{AB}$.

Now, with the assumption $\rho_{AB}\succeq_{\rm st}\phi_{AB}$, there exists a sequence of LOSR transformations $\{ {\mathcal{E}}_{n}^{a|x}\}_{a,x}$ indexed by $n\in\mathbb{N}$ achieving 
\begin{align}\label{Eq: sup convergence to zero}
    \lim_{n\rightarrow \infty}\sup_{x \in \mathcal{X}}\int_{\mathcal{A}}da\; \left\lVert {\mathcal{E}_{n}^{a|x}({\rho}_{AB})} - {\tilde{\phi}^{a|x}} \right\rVert_1 = 0.
\end{align}
Now, we define the probability distributions
\begin{align}
    \mu_n(a,b,x,y) 
    &\coloneqq P_\mathcal{X}(x)P_\mathcal{Y}(y)\mathrm{Tr}\left[{M}_{BB'}^{b}\left(\mathcal{E}_{n}^{a|x}({\rho}_{AB})\otimes {\psi}_{B'}^y\right)\right],
\end{align}
which satisfies $\mu_n\in\mathbb{P}(\rho_{AB},\mathcal{G})$ for every $n\in\mathbb{N}$ {[see also Eq.~\eqref{Eq: P set definition}]}.
Then it follows that (below, $\norm{\cdot}_{L^1}$ denotes the $L^1$ norm, and $\norm{\cdot}_1$ denotes the trace norm)
\begin{align}
    \lVert  \mu_n -  \mu\rVert_{L^1} 
    &\coloneqq \int_{\mathcal{A},\mathcal{B},\mathcal{X},\mathcal{Y}}da\;db\;dx\;dy\;|\mu_n(a,b,x,y)- \mu(a,b,x,y)| \\
    &= \int_{\mathcal{A},\mathcal{B},\mathcal{X},\mathcal{Y}}da\;db\;dx\;dy\;{P_\mathcal{X}(x)P_\mathcal{Y}(y)}\left| \mathrm{Tr}\left[{M}_{BB'}^{b}{\left(\left(\mathcal{E}_{n}^{a|x}({\rho}_{AB})-\tilde{\phi}^{a|x}\right)\otimes {\psi}_{B'}^y\right)}\right] \right| \\
    &\label{Eq:Computation001}\leq \int_{\mathcal{A},\mathcal{B},\mathcal{X},\mathcal{Y}}da\;db\;dx\;dy\;{P_\mathcal{X}(x)P_\mathcal{Y}(y)}\mathrm{Tr}\left[{M}_{BB'}^{b}{\left(\left|\mathcal{E}_{n}^{a|x}({\rho}_{AB})-\tilde{\phi}^{a|x}\right|\otimes {\psi}_{B'}^y\right)}\right] \\
    &= \int_{\mathcal{A},\mathcal{X},\mathcal{Y}}da\;dx\;dy\;{P_\mathcal{X}(x)P_\mathcal{Y}(y)}\mathrm{Tr}\left[{\left|\mathcal{E}_{n}^{a|x}( {\rho}_{AB})-\tilde{\phi}^{a|x}\right|\otimes {\psi}_{B'}^y}\right] \\
    &= \int_{\mathcal{A},\mathcal{X}}da\;dx\;{P_\mathcal{X}(x)}\left\lVert {\mathcal{E}_{n}^{a|x}({\rho}_{AB})-\tilde{\phi}^{a|x}}\right\rVert_1 \\
    &= \int_{\mathcal{X}}dx\; {P_\mathcal{X}(x)} \int_{\mathcal{A}}da\; \left\lVert {\mathcal{E}_{n}^{a|x}( {\rho}_{AB})-\tilde{\phi}^{a|x}}\right\rVert_1 \\
    &\leq \int_{\mathcal{X}}dx\; P_\mathcal{X}(x)\sup_{x'\in\mathcal{X}} \int_{\mathcal{A}}da\; \left\lVert \mathcal{E}_{n}^{a|x'}( {\rho}_{AB})-\tilde{\phi}^{a|x'}\right\rVert_1\\
    &= {\sup_{x'\in\mathcal{X}} \int_{\mathcal{A}}da\; \left\lVert \mathcal{E}_{n}^{a|x'}( {\rho}_{AB})-\tilde{\phi}^{a|x'}\right\rVert_1}.
\end{align}
In Eq.~\eqref{Eq:Computation001}, we use the fact that $|{\rm Tr}(PQ)|\le{\rm Tr}(P|Q|)$ for every positive semi-definite operator $P\succeq0$ and hermitian operator $Q$.
{Using Eq.~\eqref{Eq: sup convergence to zero}, we thus conclude that
\begin{align}
\lim_{n\rightarrow \infty}\lVert  \mu_n -  \mu\rVert_{L^1}=0.
\end{align}
In other words, since $\mu_n\in{\mathbb{P}({\rho}_{AB},\mathcal{G})}$ $\forall\,n$,
\begin{align}
\inf_{\mu'\in {\mathbb{P}({\rho}_{AB},\mathcal{G})}}\lVert \mu' -\mu \rVert_{L^1} =0.
\end{align}
Note that the above argument works for arbitrarily given steering game $\mathcal{G}$ and $\mu \in \mathbb{P}({\phi}_{AB}, \mathcal{G})$.}
As a result, 
for {every} steering games $\mathcal{G}$, we have
\begin{align}
 \inf_{\mu'\in {\mathbb{P}({\rho}_{AB},\mathcal{G})}}\lVert \mu' -\mu \rVert_{L^1} =0 \quad\forall\;\mu \in \mathbb{P}({{\phi}_{AB}},\mathcal{G}).
\end{align}
Hence, we conclude that
\begin{align}
    \overline{\mathbb{P}}({\rho}_{AB},\mathcal{G}) \supseteq {\mathbb{P}}({\phi}_{AB},\mathcal{G})\quad\forall\; \mathcal{G},
\end{align}
where we recall that $\overline{\mathbb{P}}$ is the closure in the $L^1$-norm topology.
Since $\overline{\mathbb{P}}({\rho}_{AB},\mathcal{G})$ is closed in this topology, this also implies that
\begin{align}
  \overline{\mathbb{P}}({\rho}_{AB},\mathcal{G}) \supseteq \overline{\mathbb{P}}({\phi}_{AB},\mathcal{G})\quad\forall\; \mathcal{G},  
\end{align}
which is the desired result $\rho_{AB}\succeq_{\rm game}\phi_{AB}$.
\end{proof}

\bigskip

{Finally, to complete the proof of Theorem~\ref{Result:steering game combined} in the main text, it remains to prove Eq.~\eqref{theo:assemblage_conversion}.}\\

\begin{proof}[Proof of {Eq.~\eqref{theo:assemblage_conversion}}]
With the assumption $\rho_{AB}\succ_{\rm game}\phi_{AB}$, which means that ${\mathbb{P}}({\rho}_{AB},\mathcal{G}) \supseteq \overline{\mathbb{P}}({\phi}_{AB},\mathcal{G})$ $\forall\; \mathcal{G}$, we consider a special type of steering games that have the index set of $\mathcal{Y} = \mathbb{R}^{2N}$, where $N$ is the number of modes for the state $\mathrm{Tr}_A[\phi_{AB}]$.
Then, by setting $y=(y_x^{(1)},y_p^{(1)},y_x^{(2)},y_p^{(2)},...,y_x^{(N)},y_p^{(N)})$, we can take the $N$-mode coherent states $\{|\alpha_{y}\rangle \langle \alpha_{y}| \}_{y\in\mathbb{R}^{2N}}$ as the inputs $\psi_{B'}^y$ for Bob's side, {where [also recall from Eq.~\eqref{Eq:coherent state definition}]}
\begin{align}\label{Eq:N-mode coherent state}
    |\alpha_{y}\rangle  
    \coloneqq{\hat{D}(\alpha_{y})\ket{0}\coloneqq} \bigotimes_{n = 1}^Ne^{i \sqrt{2}y_p^{(n)} \hat{x}_n-i\sqrt{2}y_x^{(n)}\hat{p}_n} |0\rangle
    =\bigotimes_{n = 1}^N |y_x^{{(n)}} + iy_p^{{(n)}}\rangle,
\end{align}
{
where  the $N$-mode displacement operator is defined as
\begin{align}\label{Eq:N-mode displacement operator}
\hat{D}(\alpha_{y})\coloneqq \bigotimes_{n = 1}^Ne^{i \sqrt{2}y_p^{(n)} \hat{x}_n-i\sqrt{2}y_x^{(n)}\hat{p}_n}.
\end{align}
For any} steering game of this type (which is again denoted by $\mathcal{G}$), we {again} have ${\mathbb{P}}({\rho}_{AB},\mathcal{G}) \supseteq \overline{\mathbb{P}}({\phi}_{AB},\mathcal{G})$ {from the assumption $\rho_{AB}\succ_{\rm game}\phi_{AB}$}.
Now, for arbitrarily given POVMs $\{{M}_{BB'}^{b}\}$, $\{ {E}_{A}^{a|x} \}$ and channel $\mathcal{L}_B$ acting on $B$ (namely, completely-positive trace-preserving linear map), we have (below, ${\rm id}_A$ denotes the identity map acting on $A$, {and $P_\mathcal{X},P_\mathcal{Y}$ are the probability distributions on $\mathcal{X},\mathcal{Y}$ contained in the steering game $\mathcal{G}$})
\begin{align}
   & {P_\mathcal{X}(x)}P_\mathcal{Y}(y)\mathrm{Tr}\left[ \left({E}_A^{a|x}\otimes{M}_{BB'}^{b}\right)\Big( \left({\rm id}_A\otimes \mathcal{L}_B\right)({\phi}_{AB}) \otimes \proj{\alpha_{y}}_{B'} \Big)\right] \notag\\
    &\quad\quad= {P_\mathcal{X}(x)}P_\mathcal{Y}(y)\mathrm{Tr}\left[ \left({E}_A^{a|x}\otimes(\mathcal{L}_B^\dagger \otimes {\rm id}_{B'})({M}_{BB'}^{b})\right)({\phi}_{AB} \otimes \proj{\alpha_{y}}_{B'})\right],
\end{align}
where $\{(\mathcal{L}_B^\dagger \otimes {\rm id}_{B'})({M}_{BB'}^{b})\}_b$ is again a POVM acting on $BB'$.
This means the above probability distribution is in ${\mathbb{P}}({\phi}_{AB},\mathcal{G})$.
Since ${\mathbb{P}}({\rho}_{AB},\mathcal{G}) \supseteq \overline{\mathbb{P}}({\phi}_{AB},\mathcal{G})$, we conclude that there exist POVMs $\{{R}_{BB'}^{b|\lambda}\}$ and $\{ {Z}_{A}^{a|x,\lambda} \}$ and share randomness $\lambda\in\Lambda$ such that
\begin{align}\label{Eq: proof equality 001}
    &{P_\mathcal{X}(x)}P_\mathcal{Y}(y)\mathrm{Tr}\left[ \left({E}_A^{a|x}\otimes{M}_{BB'}^{b}\right)\Big( \left({\rm id}_A\otimes \mathcal{L}_B\right)({\phi}_{AB}) \otimes \proj{\alpha_{y}}_{B'} \Big)\right]\nonumber\\
    &\quad\quad={P_\mathcal{X}(x)}P_\mathcal{Y}(y)\int_{\Lambda}d\mu(\lambda)\mathrm{Tr}\left[  \left({Z}_{A}^{a|x,{\lambda}}\otimes{R}_{BB'}^{b|\lambda}\right)( {{\rho}_{AB}} \otimes \proj{\alpha_{y}}_{B'} )\right].
\end{align}
Now, {to proceed,} we rewrite the two-mode squeezed state as defined in Eq.~\eqref{eq:TMSS} into the following form for some $\beta\in(0,\infty)$ [here, we use a different notation to avoid confusion with Eq.~\eqref{eq:TMSS}]:
\begin{align}
    |{\Phi^\beta_{BB'}} \rangle = {\sqrt{1-e^{-2\beta}}} {\sum_{n=0}^\infty} e^{-n\beta}{\ket{nn}_{BB'}},
\end{align}
which satisfies the following equality for every coherent state $\ket{\alpha}$ {[defined in Eq.~\eqref{Eq:coherent state definition}]}:
\begin{align}
    | \alpha \rangle_{B'} =  \frac{e^{\frac{|\alpha|^2}{2}(e^{2\beta}-1)}}{\sqrt{1-e^{-2\beta}}} \left( \langle {e^{\beta }\alpha} |_{B} \otimes {\mathbb{I}_{B'}} \right) |{\Phi^\beta_{BB'}} \rangle.
\end{align}
Hence, by redefining $e^\beta\alpha_{y}\to\alpha_{y}$ and set {$P_\mathcal{X}(x)>0$ $\forall\,x\in\mathcal{X}$ as well as} $P_\mathcal{Y}(y)> 0$ $\forall\,y\in\mathcal{Y}$, Eq.~\eqref{Eq: proof equality 001} can be further rewritten into (below, $B''$ is a system identical to $B$ and $B'$, and we write $\Phi^\beta\coloneqq\proj{\Phi^\beta}$)
\begin{align}
   &\mathrm{Tr}\left[ \left( {E}_A^{a|x}\otimes{M}_{BB'}^{b}\otimes\proj{\alpha_{y}}_{B''}\right)\left( \left({\rm id}_A\otimes \mathcal{L}_B\right)({\phi}_{AB})\otimes {{(\Phi^\beta)^{\otimes N}_{B'B''}}} \right)\right]\nonumber\\ 
    &\quad\quad=  \int_{\Lambda}d\mu(\lambda)\mathrm{Tr}\left[ \left( {Z}_A^{a|x,{\lambda}}\otimes{R}_{BB'}^{b|\lambda}\otimes\proj{\alpha_{y}}_{B''}\right)\left( {\rho_{AB}} \otimes {{(\Phi^\beta)^{\otimes N}_{B'B''}}} \right)\right].
\end{align}
Because the Husimi-Q function of a state [defined as $\mathcal{Q}_{\phi} (\alpha)\coloneqq \langle \alpha| \phi |\alpha\rangle$ for coherent states $\ket{\alpha}$] is an unique representation\cite{Serafini2023}, and we have set $\ket{\alpha_{y}}$'s to be coherent states, we conclude that
\begin{align}\label{Eq: proof computation 002}
    &\mathrm{Tr}_{ABB'}\left[ \left( {E}_A^{a|x}\otimes{M}_{BB'}^{b}\otimes{\mathbb{I}_{B''}}\right)\left({\left({\rm id}_A\otimes \mathcal{L}_B\right)({\phi}_{AB})}  \otimes {(\Phi^\beta)^{\otimes N}_{B'B''}} \right)\right] \nonumber\\ 
    &\quad\quad=\int_{\Lambda}d\mu(\lambda)\mathrm{Tr}_{ABB'}\left[ \left({Z}_A^{a|x,\lambda}\otimes{R}_{BB'}^{b|\lambda}\otimes{\mathbb{I}_{B''}}\right)\left( {{\rho}_{AB}} \otimes {(\Phi^\beta)^{\otimes N}_{B'B''}} \right)\right].
\end{align}

Now, let us set {$\mathcal{B}=\mathbb{R}^{2N}$ and write $b\in\mathcal{B}=\mathbb{R}^{2N}$ as} $b=(b_x^{(1)},b_p^{(1)},b_x^{(2)},b_p^{(2)},...,b_x^{(N)},b_p^{(N)})$.
{Then, by using the $N$-mode displacement operator $\hat{D}(\alpha_{b})$ given in Eq.~\eqref{Eq:N-mode displacement operator} for the variable $b$, namely,}
%
\begin{align}\label{Eq:N-mode displacement operator}
\hat{D}(\alpha_{b})\coloneqq \bigotimes_{n = 1}^Ne^{i \sqrt{2}b_p^{(n)} \hat{x}_n-i\sqrt{2}b_x^{(n)}\hat{p}_n},
\end{align}
{we can write} the $2N$-mode CV Bell measurements as 
\begin{align}
{\rm BSM}_{BB'}^b\coloneqq\frac{1}{(2\pi)^N}(\hat{D}_{B}(\alpha_{b})\otimes {\mathbb{I}_{B'}}){\proj{\Psi_{BB'}^{(N)}}}(\hat{D}_{B}(\alpha_{b})\otimes {\mathbb{I}_{B'}})^\dag.
\end{align}
{Here,} $|\Psi^{(N)}_{BB'}\rangle \coloneqq \sum_{\bm{n} \in \mathbb{N}_0^N} \ket{\bm{n}}_B\otimes\ket{\bm{n}}_{B'}$ is the unnormalized maximum entangled state, and $\mathbb{N}_0\coloneqq\{0\}\cup\mathbb{N}$.
We then set ${M}_{BB'}^{b}$ to be this CV Bell measurements, that is,
\begin{align}
    {M}_{BB'}^{b}& = {{\rm BSM}_{BB'}^b}.
\end{align}

{Now, we write the assemblage induced by $\phi_{AB}$ as
\begin{align}\label{Eq: proof assemblage of phi}
    \tilde{\phi}^{a|x}_B 
    \coloneqq \mathcal{L}_B\circ\mathrm{Tr}_{A}\left[ \left( {E}_A^{a|x}\otimes \mathbb{I}_{B}\right){\phi}_{AB}  \right]. 
\end{align}
Then, Eq.~\eqref{Eq: proof computation 002} becomes} (below, $\hat{n}$ is the $N$-mode number operator, and we use lower case to avoid confusion with the number of modes $N$)
\begin{align}
    \left( \frac{1-e^{-2\beta}}{2\pi} \right)^N e^{-\beta \hat{n}}\hat{D}_{B''}^\dag(\alpha_{b}) {\tilde{\phi}^{a|x}_{B''}}\hat{D}_{B''}(\alpha_{b})e^{-\beta \hat{n}} = 
    \int_{\Lambda}d\mu(\lambda)\mathrm{Tr}_{ABB'}\left[ \left({Z}_A^{a|x,\lambda}\otimes{R}_{BB'}^{b|\lambda}\otimes{\mathbb{I}_{B''}}\right)\left( {\rho_{AB}} \otimes {(\Phi^\beta)^{\otimes N}_{B'B''}}\right)\right]. 
\end{align}
Now, we apply the displacement operator $\hat{D}_{B''}(\alpha_{b})$ {and $\hat{D}_{B''}^\dagger(\alpha_{b})$} on both sides and take an integral over $b$, resulting in
\begin{align}
   {\mathcal{I}}_{{B''}}^{\beta}\left( {\tilde{\phi}^{a|x}_{B''}}\right) =\int_{\Lambda}d\mu(\lambda)\int_{\mathbb{R}^{2N}}db \;\hat{D}_{B''}(\alpha_{b})\mathrm{Tr}_{ABB'}\left[ \left({Z}_A^{a|x,\lambda}\otimes{R}_{BB'}^{b|\lambda}\otimes{\mathbb{I}_{B''}}\right)\left({\rho_{AB}} \otimes {(\Phi^\beta)^{\otimes N}_{B'B''}} \right)\right]\hat{D}_{B''}^\dag(\alpha_{b})\;\label{eq:conversion_0}
\end{align}
where the {mapping} ${\mathcal{I}}_{{B''}}^{\beta}$ is the random displacement channel acting on the system $B''$ defined as (see, e.g., Ref.~\cite{Lami2018CVtopology})
\begin{align}\label{Eq:random displacement channel def}
  {\mathcal{I}}_{{B''}}^{\beta}\left( \cdot \right)\coloneqq&  \left( \frac{1-e^{-2\beta}}{2\pi} \right)^N\int_{\mathbb{R}^{2N}}db\; \hat{D}_{B''}(\alpha_{b})e^{-\beta \hat{n}}\hat{D}_{B''}^\dag(\alpha_{b}){( \cdot )}\hat{D}_{B''}(\alpha_{b})e^{-\beta \hat{n}}\hat{D}_{B''}^\dag(\alpha_{b}) \notag\\
  =&\frac{1}{\left(2\pi  \tanh(\frac{\beta}{2})\right)^N}\int_{\mathbb{R}^{2N}}db \; e^{-\frac{|b|^2}{2\tanh(\frac{\beta}{2})}}\hat{D}_{B''}(\alpha_{b}){( \cdot )}\hat{D}_{B''}^\dag(\alpha_{b}).
\end{align}
By defining
\begin{align}
    \mathcal{L}_{B''}^{\beta,\lambda}(\cdot)\coloneqq\int_{\mathbb{R}^{2N}}db \;\hat{D}_{B''}(\alpha_{b})\mathrm{Tr}_{BB'}\left[ \left({R}_{BB'}^{b|\lambda}\otimes{\mathbb{I}_{B''}}\right){\left( (\cdot)_B \otimes {(\Phi^\beta)^{\otimes N}_{B'B''}} \right)}\right]\hat{D}_{B''}^\dag(\alpha_{b}),
\end{align}
which is a channel acting on $B''$, a LOSR transformation (with $\beta$ dependence) can therefore be given by
\begin{align}\label{Eq: LOSSR transformation without mixture}
    \mathcal{E}^{\beta,a|x}(\cdot)\coloneqq\int_{\Lambda}d\mu(\lambda)\;{\mathcal{L}^{\beta,\lambda}_{B''}\circ}\mathrm{Tr}_{A}\left[ \left({Z}_A^{a|x,\lambda}\otimes{\mathbb{I}_{B}}\right){(\cdot)_{AB}}\right].
\end{align}
Hence, Eq.~\eqref{eq:conversion_0} can be rewritten as
\begin{align}
    {\mathcal{I}}^\beta_{B''}\left( \tilde{\phi}^{a|x}_{B''}\right) = {\mathcal{E}}^{\beta,a|x}({\rho}_{AB}) \label{eq:conversion_1}.
\end{align}
An important observation is that the LOSR transformation ${\mathcal{E}}^{\beta,a|x}$ on the right-hand side depends on $\Lambda$, $Z_A^{a|x,\lambda}$, and $R_{BB'}^{b|\lambda}$, while the random displacement channel ${\mathcal{I}}^\beta_{B''}$ {\em does not}.


At this point, we obtain Eq.~\eqref{eq:conversion_1} with arbitrarily given POVMs $\{E_A^{a|x}\}$ and channel $\mathcal{L}_B$.
Now, we further consider the probabilistic mixture between them, which is extending Eq.~\eqref{Eq: proof assemblage of phi} into
\begin{align}\label{Eq: proof assemblage of sigma}
    \tilde{\sigma}_{{B}}^{a|x}
    &\coloneqq  \int_{\Lambda'}d\mu(\lambda')\;{\mathcal{L}_B^{\lambda'}\circ\mathrm{Tr}_{A}\left[ \left( {E}_A^{a|x,\lambda'}\otimes\mathbb{I}_{B}\right){\phi}_{AB}  \right]} \\
    &\eqqcolon \int_{\Lambda'}d\mu(\lambda')\;\tilde{\phi}_{{B}}^{a|x,\lambda'},
\end{align}
where we define 
\begin{align}
\tilde{\phi}_{B}^{a|x,\lambda'}\coloneqq\mathcal{L}_B^{\lambda'}\circ\mathrm{Tr}_{A}\left[ \left( {E}_A^{a|x,\lambda'}\otimes\mathbb{I}_{B}\right){\phi}_{AB}  \right], 
\end{align}
which is basically Eq.~\eqref{Eq: proof assemblage of phi} with $\lambda'$-dependence. This also means that, for each $\lambda'$, {the mapping} $\mathcal{L}_B^{\lambda'}$ is a channel and $\{{E}_A^{a|x,\lambda'}\}_{a,x}$ is a set of POVMs.
Then, we can repeat the same argument for every $\tilde{\phi}_{B''}^{a|x,\lambda'}$.
For each $\lambda'$ and every $\beta\in(0,\infty)$, we obtain an LOSR transformation of the form Eq.~\eqref{Eq: LOSSR transformation without mixture}, but now with $\lambda'$-dependence and is denoted as $\mathcal{E}^{\beta,a|x,\lambda'}$. 
They achieve the relation in Eq.~\eqref{eq:conversion_1}; namely, for every $\beta\in(0,\infty)$, 
\begin{align}
    {\mathcal{I}}^\beta_{B''}\left( \tilde{\phi}^{a|x,\lambda'}_{B''}\right) = {\mathcal{E}}^{\beta,a|x,\lambda'}({\rho}_{AB}).
\end{align}
Crucially, note that the random displacement channel $\mathcal{I}_{B''}^\beta$ does not depend on $\lambda'$. This means that
\begin{align}
    \mathcal{I}^{\beta}_{{B''}}(\tilde{\sigma}_{B''}^{a|x}) = \int_{\Lambda'}d\mu(\lambda')\; \mathcal{I}^{\beta}_{{B''}} (\tilde{\phi}_{B''}^{a|x,\lambda'})=\int_{\Lambda'}d\mu(\lambda')\;\mathcal{E}^{\beta,a|x,\lambda'}(\rho_{AB}),
\end{align}
which is a convex mixture of LOSR transformations acting on $\rho_{AB}$.
Define
\begin{align}
\tilde{\rho}^{\beta,a|x}_{B''}\coloneqq\int_{\Lambda'}d\mu(\lambda')\;\mathcal{E}^{\beta,a|x,\lambda'}(\rho_{AB}).
\end{align}
{Then, by definition given in Eq.~\eqref{Eq:LOSR induced state assemblages}, we have} $\tilde{\rho}^{\beta,a|x}_{B''}\in\mathcal{S}_B(\rho_{AB})$ for every $\beta\in(0,\infty)$.
Now, we recall the fact that the random displacement channel {defined in Eq.~\eqref{Eq:random displacement channel def}} converges to the identity channel in the strong operator topology~\cite{Lami2018CVtopology}, that is, for the given $\phi_{AB}$,
\begin{align}\label{Eq: random displacement channel to identity}
    \lim_{\beta \rightarrow 0^+}\left\lVert(\mathrm{id}_A\otimes \mathcal{L}^{\lambda'}_{B})(\phi_{AB})- \big(\mathrm{id}_A\otimes(\mathcal{I}^{\beta}\circ \mathcal{L}^{\lambda'})_{B}\big)(\phi_{AB})\right\rVert_1=0\quad\forall\; \lambda'\in\Lambda'.
\end{align}
This means there is a sequence $\{\beta_n\}_{n=1}^\infty$ such that $\lim_{n\to\infty}\beta_n=0$ and 
\begin{align}\label{Eq: random displacement channel to identity002}
\lim_{n\to\infty}\left\lVert(\mathrm{id}_A\otimes \mathcal{L}^{\lambda'}_{B})(\phi_{AB})- \big(\mathrm{id}_A\otimes (\mathcal{I}^{\beta_n}\circ \mathcal{L}^{\lambda'})_{B}\big)(\phi_{AB})\right\rVert_1=0\quad\forall\;\lambda'\in\Lambda'.
\end{align}
Then, for the assemblage $\tilde{\sigma}^{a|x}$ defined in Eq.~\eqref{Eq: proof assemblage of sigma} and the given $\rho_{AB}$, 
a direct computation shows that (recall again that $B''$ is a system identical to $B$)
\begin{align}
  &\inf_{\tilde{\rho}^{a|x} \in \mathcal{S}_B({\rho}_{AB})}
   \sup_{x\in \mathcal{X}}\int_{\mathcal{A}}da\; \left\lVert \tilde{\sigma}^{a|x}_{{B''}} - \tilde{\rho}^{a|x}_{{B''}} \right\rVert_1
    \leq\lim_{{n\to\infty}}\sup_{x\in \mathcal{X}}\int_{\mathcal{A}}da\;\left\lVert \tilde{\sigma}^{a|x}_{{B''}}-  \tilde{\rho}^{{\beta_n},a|x}_{{B''}}\right\rVert_1 \\
    &\qquad\qquad= \lim_{{n\to\infty}}\sup_{x\in \mathcal{X}}\int_{\mathcal{A}}da\;\left\lVert \tilde{\sigma}^{a|x}_{{B''}} -  {\mathcal{I}}^{{\beta_n}}_{{B''}}\left( \tilde{\sigma}^{a|x}_{{B''}}\right)\right\rVert_1 \\
    &\qquad\qquad= \lim_{{n\to\infty}}\sup_{x\in \mathcal{X}}\int_{\mathcal{A}}da\;\left\lVert\int_{\Lambda'}d\mu(\lambda') (\mathcal{L}^{\lambda'}-\mathcal{I}^{{\beta_n}}\circ \mathcal{L}^{\lambda'})_{B''}\circ\mathrm{Tr}_A\left[\left({E}_A^{a|x,\lambda'}\otimes \mathbb{I}_{B''}\right)\phi_{AB''} \right] \right\rVert_1 \\
    &\qquad\qquad\leq \lim_{{n\to\infty}}\sup_{x\in \mathcal{X}}\int_{\mathcal{A}}da\int_{\Lambda'}d\mu(\lambda')\;\left\lVert (\mathcal{L}^{\lambda'}-\mathcal{I}^{{\beta_n}}\circ \mathcal{L}^{\lambda'})_{B''}\circ\mathrm{Tr}_A\left[\left({E}_A^{a|x,\lambda'}\otimes \mathbb{I}_{B''}\right)\phi_{AB''} \right] \right\rVert_1 \\
    &\qquad\qquad=\lim_{{n\to\infty}}\sup_{x\in \mathcal{X}}\int_{\mathcal{A}}da\int_{\Lambda'}d\mu(\lambda')\;\mathrm{Tr}\left| (\mathcal{L}^{\lambda'}-\mathcal{I}^{\beta_n}\circ \mathcal{L}^{\lambda'})_{B''}\circ\mathrm{Tr}_A\left[\left({E}_A^{a|x,\lambda'}\otimes \mathbb{I}_{B''}\right)\phi_{AB''} \right]\right| \\
    &\qquad\qquad\leq \lim_{{n\to\infty}}\sup_{x\in \mathcal{X}}\int_{\mathcal{A}}da\int_{\Lambda'}d\mu(\lambda')\;\mathrm{Tr}\left[ 
   \left({E}_A^{a|x,\lambda'}\otimes \mathbb{I}_{B''}\right)\left|    \big(\mathrm{id}_A\otimes(\mathcal{L}^{\lambda'}-\mathcal{I}^{{\beta_n}}\circ \mathcal{L}^{\lambda'})_{B''}\big)(\phi_{AB''})\right|\right]\\
    &\qquad\qquad= \lim_{{n\to\infty}}\int_{\Lambda'}d\mu(\lambda')\;\left\lVert(\mathrm{id}_A\otimes \mathcal{L}^{\lambda'}_{B''})(\phi_{AB''})- \big(\mathrm{id}_A\otimes (\mathcal{I}^{{\beta_n}}\circ \mathcal{L}^{\lambda'})_{B''}\big)(\phi_{AB''})\right\rVert_1,\label{eq:conversion_result}
\end{align}
{where again we use the fact that $|{\rm Tr}(PQ)|\le{\rm Tr}(P|Q|)$ for every positive semi-definite operator $P\succeq0$ and hermitian operator $Q$.}
Now, define the sequence of functions
\begin{align}
g_n(\lambda')\coloneqq\left\lVert(\mathrm{id}_A\otimes \mathcal{L}^{\lambda'}_{B''})(\phi_{AB''})- \big(\mathrm{id}_A\otimes (\mathcal{I}^{\beta_n}\circ \mathcal{L}^{\lambda'})_{B''}\big)(\phi_{AB''})\right\rVert_1.
\end{align}
Then, by Eq.~\eqref{Eq: random displacement channel to identity002}, we have $\lim_{n\to\infty}g_n(\lambda')=0$ for every $\lambda'\in\Lambda'$ and, by the triangle inequality of $\norm{\cdot}_1$, we have that
\begin{align}
g_n(\lambda')\le\left\lVert(\mathrm{id}_A\otimes \mathcal{L}^{\lambda'}_{B''})(\phi_{AB''})\right\rVert_1+\left\lVert\big(\mathrm{id}_A\otimes (\mathcal{I}^{\beta_n}\circ \mathcal{L}^{\lambda'})_{B''}\big)(\phi_{AB''})\right\rVert_1\le2\quad\forall\;\lambda'\in\Lambda'\;\&\;n\in\mathbb{N}.
\end{align}
Also, note that the function $h(\lambda')=2$ is (Lebesgue) integrable on $\Lambda'$, that is, $\int_{\Lambda'}d\mu(\lambda')h(\lambda')=2\int_{\Lambda'}d\mu(\lambda')=2<\infty$ [recall that $d\mu(\lambda')$ is a probability measure on $\lambda'\in\Lambda'$].
Then, by applying Lebesgue’s dominated convergence theorem~\cite{Real_Analysis} with $h(\lambda')$ as the dominated function, we conclude that
\begin{align}
    \inf_{\tilde{\rho}^{a|x} \in \mathcal{S}_B({\rho}_{AB})}
   \sup_{x\in \mathcal{X}}\int_{\mathcal{A}}da\; \left\lVert \tilde{\sigma}^{a|x}_{{B''}} - \tilde{\rho}^{a|x}_{{B''}} \right\rVert_1
   \leq& \lim_{n\to\infty}\int_{\Lambda'}d\mu(\lambda')\;g_n(\lambda')=\int_{\Lambda'}d\mu(\lambda')\;\lim_{n\to\infty}g_n(\lambda')=0.
\end{align}
Note that the above argument holds for every $\tilde{\sigma}^{a|x} \in \mathcal{S}_B(\phi_{AB})$, as Eq.~\eqref{Eq: proof assemblage of sigma} is the expression of a generic element in $\mathcal{S}_B(\phi_{AB})$.
This thus means ${\rho_{AB}} \succeq_{\mathrm{st}} {\phi_{AB}}$, 
which completes the proof.
\end{proof}

\section{Bipartite CV Steering Criteria}
\label{app:CV_steer}
\subsection{{Reid's Criterion}}
In the seminal paper~\cite{reid1989demonstration}, Reid proposed a noisy version of the original Einstein-Podolsky-Rosen paradox~\cite{EPR} as a witness of steering. Consider the conditioned state of Bob after the measurement of Alice. Alice has two measurement choices, which we will label as $x$ and $p$, obtaining as a result the random variable $v_x$ or $v_p$. Knowing the value of her result, Alice can try to estimate the corresponding Bob's value of $\hat{x}_B$, or $\hat{p}_B$. We can therefore build Alice's error operators as follows:
\begin{align}
    \hat{x}_{\rm err}&\coloneqq\hat{x}_B|_{v_x}-x_{\rm est}(v_x)\;, \\
    \hat{p}_{\rm err}&\coloneqq\hat{p}_B|_{v_p}-p_{\rm est}(v_p)\;,
\end{align}
Reid's criterion says that unsteerable states satisfy the Heisenberg uncertainty principle for $\hat{x}_{\rm err}$ and $\hat{p}_{\rm err}$. That is, the following inequality holds for unsteerable states:
\begin{align}
    \langle \hat{x}_{\rm err}^2\rangle \langle \hat{p}_{\rm err}^2\rangle \geq \frac{1}{4},
    \label{eq:uncert_prod}
\end{align}
which is based on the standard Heisenberg uncertainty relation.
Similarly, in such a case, we also have
\begin{align}
    \langle \hat{x}_{\rm err}^2\rangle + \langle \hat{p}_{\rm err}^2\rangle \geq 1\;.
    \label{eq:uncert_sum}
\end{align}
Notice that $\langle\hat{x}_{\rm err}^2\rangle$ is minimized when $x_{\rm est}(v_x)=\langle\hat{x}_B\rangle|_{v_x}$, and similarly for $\hat{p}_{\rm err}$. That is, the optimal estimation by Alice is that of guessing the average value of Bob's quadratures conditioned on Alice's outcome. In such a case, one has
\begin{align}
    \langle\hat{x}_{\rm err}^2\rangle \equiv {\rm Var}[\hat{x}_B]|_{v_x}\;,\quad 
     \langle\hat{p}_{\rm err}^2\rangle \equiv {\rm Var}[\hat{p}_B]|_{v_p}\;.
\end{align}

\subsection{Covariance Matrix Criterion}
An equivalent criterion is that proposed by Wiseman~\emph{et~al.} in Ref.~\cite{wiseman2007steering} in terms of the {\em covariance matrix} of $\rho_{AB}$ which is a matrix defined as $(V_{{AB}})_{ij}\coloneqq\langle\{\Delta \hat{r}_i,\Delta \hat{r}_j\}\rangle_{\rho_{AB}}$, where $\hat{r}_i$ is the vector $\{\hat{x}_A,\hat{p}_A,\hat{x}_B,\hat{p}_B\}$.
As a matrix, $V_{AB}$ can be expressed as
\begin{align}
    V_{AB}=\begin{pmatrix}
        V_A & C \\ C^\intercal & V_B
    \end{pmatrix}
\end{align}
Then, the condition states that
$\rho_{AB}$ is unsteerable if and only if
\begin{align}
    V_{AB}+ \begin{pmatrix}
        0 & 0 \\ 0 & i\Omega
    \end{pmatrix}
    \geq 0.
\end{align}
Here, $\Omega$ is the usual symplectic matrix $\Omega \coloneqq \begin{pmatrix} 0 & 1 \\ -1 & 0 \end{pmatrix}$.

\subsection{The Case of a Two-Mode Squeezed State}
\label{sec:TMSS}
Consider a pure two-mode squeezed state, whose covariance matrix can be expressed as
\begin{align}
    V_{AB}(\gamma)=\begin{pmatrix}
        \cosh{\gamma} & 0 & \sinh{\gamma} & 0 \\
        0 & \cosh{\gamma} & 0 & -\sinh{\gamma} \\
        \sinh{\gamma} & 0 & \cosh{\gamma} & 0 \\
        0 & -\sinh{\gamma} & 0 & \cosh{\gamma} \\
    \end{pmatrix}\;,
\end{align}
where $\gamma$ is the squeezing parameter. 
Then, it follows that upon Alice measuring $\hat{x}$, the marginal distribution on $\hat{x}_B$ is Gaussian with the variance
\begin{align}
    {\rm Var}[\hat{x}_B]|_{v_x}=\frac{1}{2\cosh \gamma}\;.
\end{align}
In a similar fashion, one obtains
\begin{align}
    {\rm Var}[\hat{p}_B]|_{v_p}=\frac{1}{2\cosh \gamma}\;.
\end{align}
By comparing the above expressions with the uncertainty principles characterized by Eqs.~\eqref{eq:uncert_prod} and~\eqref{eq:uncert_sum}, we see that all pure two-mode squeezed states are steerable with Gaussian measurements as soon as the squeezing $\gamma$ is nonzero.

\section{Technical Details for MDI Detection with Gaussian States and Measurements (Sec.~\ref{sec:MDI_GAUSS})}

\subsection{Proof of The Unsteerable Bound}
\label{app:MDI_gauss_wit}

If $\rho_{AB}$ is unsteerable, then, whatever the measurement of Alice, the resulting Alice-to-Bob assemblage is of the form
\begin{align}
    \rho_B^{a|r}=\int d\lambda f(\lambda) p(a|r,\lambda)\sigma^{(\lambda)}_B.
\end{align}
Therefore, the statistics of $a,b_1,b_2$ will be given by
\begin{align}
    p(a,b_1,b_2|r,\ket{\beta})=\Tr\left[M^{b_1,b_2}_{BB'}{\left(\rho^{{a|r}}_B\otimes\ketbra{\beta}_{B'}\right)}\right]
\end{align}
for some local POVM $M^{b_1,b_2}_{BB'}$.
We then have
\begin{align}
    \left\langle \left( b_1 - \frac{a}{\sqrt{2}}-\beta_x\right)^2\right\rangle\bigg|_{r=x}
    =\int d\lambda f(\lambda) \left\langle \left( b_1 - \frac{a}{\sqrt{2}}-\beta_x\right)^2\right\rangle\bigg|_{r=x,\lambda}
\end{align}
and 
a similar form can also be obtained
for the second term in Eq.~\eqref{eq:witness}. Its minimum can be therefore lower bounded by fixing a single value of $\lambda$, that is, assuming that there is no randomness associated. 
Hence, without loss of generality, we can set $f(\lambda)\equiv \delta(\lambda-\lambda_0)$ and therefore we can assume uncorrelated $a$ and $\vec{b}=(b_1,b_2)$.
Then, the assemblage becomes $\rho_B^{a|r}=p(a|r)\sigma_B$, and consequently $\sigma_B\equiv\rho_B$. It is clear that in such a case, as $a$ is uncorrelated to $\Vec{b}$ and $\ket{\beta}$, the witness is lower bounded by
\begin{align}
    \mathcal{W}\geq {\rm Var }[a]+\left\langle (b_1-\beta_x)^2+(b_2-\beta_p)^2\right\rangle|_{\rho_B}
\end{align}
and, therefore,
\begin{align}
    \mathcal{W}\geq \left\langle (b_1-\beta_x)^2+(b_2-\beta_p)^2\right\rangle|_{\rho_B}
\end{align}
in the unsteerable case.
From standard estimation theory inequalities, it follows that if the prior distribution of $\beta$ is wide enough, this lower bound approaches $\sim 1$. More precisely, consider the above protocol (as also given in Fig.~\ref{fig:setup}) in which  random coherent states $\ket{\beta}$ are sent by Bob to its measurement device. If $\rho_{AB}$ is unsteerable, then the minimum value of $\mathcal{W}$ as defined in Eq.~\eqref{eq:witness} is lower bounded by
\begin{align}
    \mathcal{W} \geq \frac{\Delta_\beta^2}{1+\Delta_\beta^2}\;.
\end{align}
Here, $\Delta_{\beta}$ corresponds to the width of the distribution with which $\beta$ is sampled, which we assume to be Gaussian for simplicity, i.e., $P(\beta)=(\pi\sigma^2_\beta)^{-1}{\rm Exp}[-|\beta|^2/\Delta^2_\beta]$. 
This inequality is based on the Bayesian Cramer-Rao bound, and it is the standard metrology inequality that can be found in, e.g., Refs.~\cite{genoni2013optimal,morelli2021bayesian,abiuso2021measurement,abiuso2023verification}.

\subsection{Detection Protocol for All Gaussian Steerable States}
\label{app:protocol_viola}

To violate the unsteerable bound Eq.~\eqref{eq:EB_bound_main} by the witness defined in Eq.~\eqref{eq:witness} with all Gaussian steerable states, we consider the protocol detailed in Fig.~\ref{fig:setup}. That is, Bob's station mixes the input coherent state with his share of the source in a 50:50 beam splitter. the outputs of Bob are therefore represented by the observables
\begin{align}
    \hat{b}_1 &=\frac{\hat{x}_B+\hat{x}_\beta}{\sqrt{2}}\;,\\
    \hat{b}_2 &=\frac{-\hat{p}_B+\hat{p}_\beta}{\sqrt{2}}\;.
\end{align}
At the same time, we ask Alice to measure the two possible quadratures $\hat{r}_A=\hat{x}_A,\hat{p}_A$, resulting in  $\mu_r$ and outputs
\begin{align}
    a= & r_{\rm est}(\mu_r)\;,
\end{align}
that is,
\begin{align}
    a= & \langle\hat{x}_B\rangle|_{v_x} \quad \text{if } r=x\;,\\
    a= & \langle\hat{p}_B\rangle|_{v_p} \quad \text{if } r=p\;.
\end{align}
When Alice and Bob follow this protocol, the $x$-term in the witness [Eq.~\eqref{eq:witness}] becomes
\begin{align}
   \left\langle \left( b_1 - \frac{a}{\sqrt{2}}-\beta_x\right)^2\right\rangle\bigg|_{r=x} =
   \left\langle \left( \frac{\hat{x}_B+\hat{x}_\beta}{\sqrt{2}}- \frac{\langle\hat{x}_B\rangle|_{v_x}}{\sqrt{2}}-\beta_x\right)^2\right\rangle 
   =\frac{\langle\hat{x}^2_{\rm err}\rangle}{2} + \frac{{\rm Var} [\hat{x}_\beta]}{2}\;,
\end{align}
and, similarly, the $p$-term is equal to $\frac{\langle\hat{p}^2_{\rm err}\rangle}{2} + \frac{{\rm Var} [\hat{p}_\beta]}{2}$. In our units, the variance  of the quadratures on coherent states is equal to ${\rm Var}[\hat{x}_\beta]={\rm Var}[\hat{p}_\beta]=\frac{1}{2}$. From this, it follows that the entire witness, in the case of the protocol detailed in Fig.~\ref{fig:setup}, becomes
\begin{align}
    \mathcal{W}_{\rm protocol}=\frac{1}{2}+\frac{\langle \hat{x}_{\rm err}^2\rangle + \langle \hat{p}_{\rm err}^2\rangle}{2}.
\end{align}
Consequently, we conclude that
if $\rho_{AB}$ is steerable and such that $\langle \hat{x}_{\rm err}^2\rangle + \langle \hat{p}_{\rm err}^2\rangle<1$, then we have
\begin{align}
    \mathcal{W}_{\rm protocol}< 1\;,
\end{align}
which violates the inequality in Eq.~\eqref{eq:EB_bound_main} 
when $\Delta_\beta$ is large enough. As an example, for a pure two-mode squeezed state (as discussed in Appendix~\ref{sec:TMSS}), we have $\mathcal{W}_{\rm protocol}=\frac{1}{2}+\frac{1}{2\cosh{\gamma}}$.

\subsubsection{From product to sum}
The final observation we need, in order to prove that all Gaussian steerable states can be detected by $\mathcal{W}_{\rm protocol}$, is the following.
From Reid~\cite{reid1989demonstration} and later works~\cite{wiseman2007steering}, we know that all steerable Gaussian states violate the inequality in Eq.~\eqref{eq:uncert_prod_main} [which is also Eq.~\eqref{eq:uncert_prod}]. This does not automatically guarantee that all steerable states violate also the inequality in Eq.~\eqref{eq:uncert_sum_main} [i.e., Eq.~\eqref{eq:uncert_sum}]. However, when we are given
\begin{align}
 \langle \hat{x}_{\rm err}^2\rangle \langle \hat{p}_{\rm err}^2\rangle <\frac{1}{4},    
\end{align}
then there always exists a Gaussian squeezing 
\begin{align}
    \hat{x}\rightarrow \hat{x}'&=\kappa \hat{x}\;,\\
    \hat{p}\rightarrow \hat{p}'&=\frac{1}{\kappa}\hat{p}\;
\end{align}
such that
\begin{align}
    \langle \hat{x}'^2_{\rm err}\rangle = \langle \hat{p}'^2_{\rm err}\rangle <\frac{1}{2}
\end{align}
and, therefore,
\begin{align}
    \langle \hat{x}'^2_{\rm err}\rangle + \langle \hat{p}'^2_{\rm err}\rangle < 1\;.
\end{align}
The squeezing operation can be implemented locally by Alice and Bob, and can be thought of as part of the measurement apparatus (before the action of the beam splitter).

\end{document}